\newtheorem{theorem}{Theorem}
\newtheorem{lemma}{Lemma}
\newtheorem{corollary}{Corollary}
\newtheorem{definition}{Definition}
\newcommand{\polylog}{\mathrm{polylog}}
\title{A faster reduction of the dynamic time warping distance \\ to the longest increasing subsequence length}
\author{Yoshifumi~Sakai$^1$}
\author{Shunsuke~Inenaga$^{2,3}$}
\affil{
  \textit{$^1$ Graduate School of Agricultural Science, Tohoku University, Sendai, Japan} \\
  \texttt{yoshifumi.sakai.c7@tohoku.ac.jp} \\ 
  \textit{$^2$ Department of Informatics, Kyushu University, Fukuoka, Japan}\\
  \texttt{inenaga@inf.kyushu-u.ac.jp} \\ 
  \textit{$^3$ PRESTO, Japan Science and Technology Agency, Kawaguchi, Japan}\\
}
\date{}
\begin{document}
\maketitle

\begin{abstract}
The similarity between a pair of time series, i.e., sequences of indexed values in time order, is often estimated by the dynamic time warping (DTW) distance, instead of any in the well-studied family of measures including the longest common subsequence (LCS) length and the edit distance.
Although it may seem as if the DTW and the LCS(-like) measures are essentially different, we reveal that the DTW distance can be represented by the longest increasing subsequence (LIS) length of a sequence of integers, which is the LCS length between the integer sequence and itself sorted.
For a given pair of time series of length $n$ such that the dissimilarity between any elements is an integer between zero and $c$, we propose an integer sequence that represents any substring-substring DTW distance as its band-substring LIS length.
The length of the produced integer sequence is $O(c n^2)$, which can be translated to $O(n^2)$ for constant dissimilarity functions.
To demonstrate that techniques developed under the LCS(-like) measures are directly applicable to analysis of time series via our reduction of DTW to LIS, we present time-efficient algorithms for DTW-related problems utilizing the semi-local sequence comparison technique developed for LCS-related problems. \\

\noindent \textbf{keywords:} string algorithms; dynamic time warping distance; longest increasing subsequence; semi-local sequence comparison.
\end{abstract}

\maketitle

\section{Introduction}\label{sec intro}

A time series is a sequence of discrete objects which are indexed in time order.
Due to the recent developments of sensing technologies and semi-automated M2M communications, a vast amount of time series data has been rapidly produced in industrial, financial, medical, and scientific domains.

The most fundamental task in time series data analytics is to compare time series sequences, and to extract their similarities.
The \emph{dynamic time warping} (\emph{DTW}) distance is a fundamental method to compute a similarity between two time series that may vary in speed.
It is essentially composed of computing an optimal one-to-many alignment of two time series.
Considering one-to-many mappings allows for dynamic shifts of time points, and it has made DTW one of the most successful algorithms in all areas of algorithms.
Indeed, not only is DTW widely utilized in time series data analysis~\cite{Muller}, but also DTW has been extended to a wide range of other applications including image processing~\cite{RM}, hand writing matching~\cite{TSW}, sign language recognition~\cite{JCA}, music retrieval~\cite{JL}, robotics~\cite{JK,JXWL}, trajectory data analysis~\cite{VG,IMA}, speech recognition~\cite{SC,MBE}, and many others.

Consider two time series sequences $A$ and $B$.
For the time being, let us assume for simplicity that $|A| = |B| = n$.
There is a fundamental dynamic programming algorithm that computes the DTW distance, together with an alignment achieving the distance, between $A$ and $B$ in $O(|A||B|) = O(n^2)$ time and space~\cite{SC}.
While it is possible to reduce the space-requirement of this dynamic programming method to $O(n)$ by applying Hirschberg's divide-and-conquer algorithm~\cite{Hirschberg}, no strongly sub-quadratic time algorithm for computing the DTW distance is known.
This is supported by the conditional lower bound such that, unless the Strong Exponential Time Hypothesis (SETH) is false, there is no $O(n^{2-\epsilon})$-time algorithm for any $\epsilon > 0$ that computes the exact value of the DTW distance of two given sequences over 5-letter alphabets~\cite{ABW,BK}.
Later, the same conditional lower bound was shown for 3-letter alphabets in the case where the cost function $d$ satisfies $d(a, b) = 1$ for any pair $a, b$ of letters~\cite{Kuszmaul}.

On the practical side, a number of fast heuristic algorithms for DTW have been proposed by the database community (see~\cite{WMDTSK} for a survey).
These algorithms typically output approximated values for the DTW distance which in many cases suffice for practical purposes, but, lack theoretical guarantees.

Unlike other sequence comparison measures such as longest common subsequences (LCS) and edit distance, DTW is not a one-to-one/zero alignment.
In addition, the underlying grid graph for DTW is vertex-weighted, while those for LCS and edit distance are edge-weighted.
Despite these different natures of DTW from those of LCS and edit distance, interestingly, computing LCS and weighted edit distance of two sequences of length $n$ can be reduced to computing DTW of two sequences of length $O(n)$~\cite{ABW,Kuszmaul}.
Thus, computing the exact DTW distance is at least as hard as for computing LCS and (weighted) edit distance.
On the other hand, it is not known whether computing DTW can be reduced to computing LCS or (weighted) edit distance.
These are most probably why finding an efficient algorithm for the exact DTW distance is rather challenging, and quite intriguing.
Indeed, the first weakly sub-quadratic time algorithm for the DTW distance, which runs in $O(n^2 \log \log \log n / \log \log n)$ time, was only recently discovered~\cite{GS}, after 40 years from the seminal paper~\cite{SC}.
More recently, the running time has been improved to $O(n^2 / \log \log n)$~\cite{GS2}.

A few DTW algorithms whose running times depend on other parameters are also known:
Hwang and Gelfand~\cite{HG1} showed how to compute the DTW distance in $O((s + t)n)$ time, where $s$ and $t$ denote the number of non-zero values in $A$ and $B$, respectively.
For the case where the minimum non-zero distance is one, Kuszmaul~\cite{Kuszmaul} proposed an algorithm for computing the DTW distance in $O(n u)$ time, where $u$ denotes the DTW distance between $A$ and $B$.
Very recently, Froese et al.~\cite{FJRW} presented a run-length-encoding (RLE) based algorithm which computes the DTW distance in $O(kn + \ell m)$ time, where $m = |A|$, $n = |B|$, and $k$ and $\ell$ are respectively the RLE sizes of $A$ and $B$.
In the case where $k \in O(\sqrt{m})$ and $\ell \in O(\sqrt{n})$, their algorithm runs in $O(k^2 \ell + \ell^2 k)$ time.

When $A$ and $B$ are both binary sequences, it is known that the DTW distance can be computed in $O(n^{1.87})$ time~\cite{ABW}.
There are other DTW algorithms for binary sequences, running in $O(s t)$ time~\cite{MCAHM,HG2}, or in $O(k\ell)$ time~\cite{DM}.
Very recently, a surprising $O(n)$-time solution has been proposed for computing the DTW distance of binary sequences~\cite{Kuszmaul2}.
In the same paper~\cite{Kuszmaul2}, an $O((k + \ell) \log (k + \ell))$-time solution was also proposed for the DTW distance of binary sequences.

\subsection{Reducing DTW to LIS}

In a previous version of this paper~\cite{SI} we presented a new approach for computing the DTW distance, based on a reduction to the \emph{longest increasing subsequence} (\emph{LIS}) problem.

Of many variants of DTW distance between $A$ and $B$ with respect to the dissimilarity between the value $a$ at any position $i$ in $A$ and the value $b$ at any position $j$ in $B$, $d(a, b) = |a - b|$ or $d(a, b) = (a - b)^2$ seems most typically used.
We here adopt a general dissimilar function $d_{A, B}(a, b)$, instead of such a specific function, but use a standard convention that the value $d_{A, B}(a, b)$ is rounded to an integer between $0$ and some positive integer $c$.

In~\cite{SI} the authors presented how the problem of computing the DTW distance between $A$ and $B$ can be reduced to computing the LIS of a sequence of $O(c^2 mn)$ integers, in $O(c^2 mn)$ time and space, where $m = |A|$ and $n = |B|$.
The merit of this method is that it allows us to perform efficient \emph{semi-local sequence comparison}~\cite{Tis} between contiguous subsequences of $A$ and $B$ based on the DTW metric, which further permits us several sophisticated comparisons of the two input sequences in $O(c^2 n^2 \polylog(n))$ time assuming $m = n$.
(More detailed description of these problems and their solutions will be given later in this paper).
We remark that a direct application of the standard DP requires $O(n^3)$ time for these comparisons.

For long sequences with large $m$ and $n$,
the value of $c$ is often negligibly small and thus can be regarded as a constant in many cases.
In particular, $c = 1$ always holds for binary time series such as spike trains
and sensor event sequences.
In these cases, the DTW distance is represented by the LIS length of an integer sequence of length $O(mn)$, or $O(n^2)$ when $m = n$.

On the other hand, for some applications where $c$ is a relatively large constant,
the $c^2$ overhead in the $O(c^2 mn)$ complexity may become problematic.
Thus, whether one can reduce this $c^2$ factor has been an intriguing and important question.

\subsection{Our contribution}

In the first part of this paper, we present a new, more efficient reduction technique of DTW to LIS.
The main part of our new reduction is to somehow redefine the DTW distance between any contiguous subsequences of $A$ and $B$ as the maximum possible weight of a restricted increasing subsequences of a sequence of $O(mn)$ integers each weighted by an integer between $0$ and $c$, where $m$ and $n$ are the lengths of $A$ and $B$, respectively.
This weighted integer sequence is transformed into a sequence of $O(cmn)$ unweighted integers, which essentially has the same property as the weighted sequence, in a straightforward way.
This permits us to speed-up the reduction by a factor of $c$,
compared to the previous approach~\cite{SI}.

For simplicity, assume $m \leq n$.
While our new $O(cn^2)$ reduction is still less time- and space-efficiently than the classical $O(n^2)$ dynamic programming method, our new reduction also permits us efficient \emph{semi-local sequence comparison} with DTW as in~\cite{SI}.
The semi-local sequence comparison problem was first considered by Tiskin~\cite{Tis} with LCS.
The task for our case is to preprocess input strings $A$ and $B$ to construct a data structure supporting $O(n^2)$ queries of the DTW distance of any pair of either a prefix of one of $A$ and $B$ and a suffix of the other, or a contiguous subsequence of one and the entire sequence of the other.
There are two na\"ive solutions for this problem:
The first na\"ive solution \textsf{NS1} is to store the input strings $A$ and $B$ with $O(n)$ space
and to apply the dynamic programming method upon query using $O(n^2)$ time each.
The second na\"ive solution \textsf{NS2} is to precompute a lookup table of $O(n^2)$ space
which explicitly stores all the answers for all possible $O(n^2)$ queries,
allowing for answering each query in $O(1)$ time.
Suppose that $c$ can be treated as a constant independent of $n$ (and of $m$).
Compared to \textsf{NS1},
our $O(n^2)$-space data structure supporting $O(\polylog(n))$-time queries
achieve exponential speed-up for answering semi-local DTW distance queries,
at the sacrifice of quadratic space usage.
Compared to \textsf{NS2}, our $O(\polylog(n))$-time queries are slower than $O(1)$-time queries of \textsf{NS2}.
However, a na\"ive application of the dynamic programming method needs $O(n^3)$ preprocessing time to compute the lookup table of \textsf{NS2}, while our data structure can be built in faster $O(n^2 \polylog(n))$ time.
To summarize, our method leads to a non-trivial time-space trade-off for this semi-local sequence comparison problem with DTW.

We further emphasize that, despite the different nature of DTW from that of LCS or edit distance noted previously, our reduction of DTW to LIS allows us to apply Tiskin's semi-local sequence comparison technique, originally developed for LCS-related problems, directly to DTW-related problems.
As such applications, we present time-efficient algorithms for the \emph{circular DTW distance}, \emph{square root DTW distance}, and \emph{periodic DTW distance} problems, which can arise in time series data analysis.

\section{Preliminaries}\label{sec pre}

For any sequences $S$ and $T$, let $S \circ T$ denote the concatenation of $S$ followed by $T$.
For any sequence $S$, we use $|S|$ to denote the length of $S$ and $S[i]$ with $1 \leq i \leq |S|$ to denote the $i$th element of $S$, so that $S = S[1] \circ S[2] \circ \cdots \circ S[|S|]$.
A subsequence of a sequence $S$ is obtained from $S$ by deleting zero or more elements at any position not necessarily contiguous, i.e., $S[i_1] \circ S[i_2] \circ \cdots \circ S[i_\ell]$ with $1 \leq \ell \leq |S|$ and $1 \leq i_1 < i_2 < \cdots < i_\ell \leq |S|$.
Any subsequence $S[i_\vdash] \circ S[i_\vdash + 1] \circ \cdots \circ S[i_\dashv]$ with $1 \leq i_\vdash \leq i_\dashv \leq |S|$ is called contiguous and denoted by $S[i_\vdash : i_\dashv]$.
A prefix (resp. suffix) of $S$ is a contiguous subsequence $S[i_\vdash : i_\dashv]$ with $i_\vdash = 1$ (resp. $i_\dashv = |S|$).

\begin{figure}[t]
\centering
\includegraphics[width=9cm]{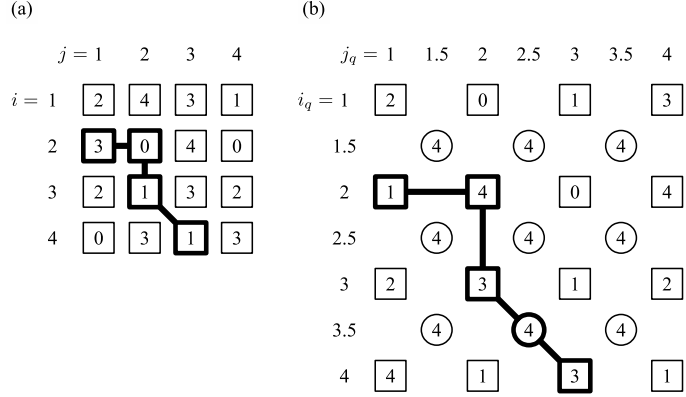}
\caption{
(a) An example of the table of the dissimilarities $d_{A, B}(i, j)$ with $|A| = |B| = 4$ and an alignment $P = (2, 1) \circ (2, 2) \circ (3, 2) \circ (4, 3)$ achieving the DTW distance between $A[2 : 4]$ and $B[1 : 3]$, which is indicated by a polygonal line; (b) The weights $w(q)$ for elements $q$ in sequence $R_{A, B}$ for the same dissimilarity function $d_{A, B}$ as (a) with $c = 4$, where each boxed integer represents $w(r(i, j)) \ (= c - d_{A, B}(i, j))$, each encircled integer represents $w(\tilde{r}(i, j)) \ (= c)$, and sequence $\hat{P} = r(2, 1) \circ r(2, 2) \circ r(3, 2) \circ \tilde{r}(4, 3) \circ r(4, 3)$ for the same $P$ as (a) is indicated by a polygonal line.
}
\label{fig d}
\end{figure}

A \emph{time series} is a nonempty finite sequence.
For any pair of time series $A$ and $B$ and any pair of their elements $A[i]$ and $B[j]$, let $d_{A, B}(i, j)$ denote a nonnegative integer that represents the \emph{dissimilarity} of $A[i]$ and $B[j]$.
An \emph{alignment} of $A[i_\vdash : i_\dashv]$ and $B[j_\vdash : j_\dashv]$ is a sequence $P = (i_1, j_1) \circ (i_2, j_2) \circ \cdots \circ (i_{|P|}, j_{|P|})$ of index pairs such that $(i_1, j_1) = (i_\vdash, j_\vdash)$, $(i_k, j_k)$ with $2 \leq k \leq |P|$ is one of $(i_{k - 1} + 1, j_{k - 1} + 1)$, $(i_{k - 1} + 1, j_{k - 1})$, or $(i_{k - 1}, j_{k - 1} + 1)$, and $(i_{|P|}, j_{|P|}) = (i_\dashv, j_\dashv)$.
The \emph{discrepancy} $d_{A, B}(P)$ of this alignment $P$ is defined as the sum of the dissimilarity $d_{A, B}(i_k, j_k)$ over all indices $k$ with $1 \leq k \leq |P|$.
The \emph{dynamic time warping (DTW) distance} between $A[i_\vdash : i_\dashv]$ and $B[j_\vdash : j_\dashv]$ is defined as the minimum of the discrepancy $d_{A, B}(P)$ over all alignments $P$ of $A[i_\vdash : i_\dashv]$ and $B[j_\vdash : j_\dashv]$.
See Figure~\ref{fig d}(a), in which a concrete example of the dissimilar function $d_{A, B}$ and an alignment $P$ achieving the DTW distance are presented.

For any sequence $S$ of integers, a subsequence of $S$ is \emph{increasing}, if any element of the subsequence other than the last one is less than the succeeding element.
Any increasing subsequence $S'$ of $S$ is $[h_\vdash : h_\dashv]$-\emph{banded}, if $S'$ consists only of integers in the range from $h_\vdash$ to $h_\dashv$.
Any increasing subsequence $T$ of $S$ is \emph{maximal}, if $T$ is the only increasing subsequence of $S$ that has $T$ itself as its subsequence.
The \emph{longest increasing subsequence (LIS) length} of $S$ is the maximum of $|T|$ over all increasing subsequences $T$ of $S$.
Any increasing subsequence of $S$ that achieves the LIS length of $S$ is called an \emph{LIS} of $S$.
The LIS problem introduced above can naturally be generalized for the case where each integer $s$ in $S$ is weighted by a non-negative integer $w(s)$ as follows.
The \emph{heaviest increasing subsequence (HIS) weight} of $S$ is defined as the maximum of $w(T)$ over all increasing subsequence $T$ of $S$, where $w(T)$ denotes the sum of $w(s)$ over all integers $s$ in $T$.
Any increasing subsequence of $S$ that achieves the HIS weight of $S$ is called an \emph{HIS} of $S$.
Note that there is at least an HIS that is maximal, due to the non-negativity of $w(s)$.

\section{Reduction}\label{sec reduction}

Let $A$ and $B$ be arbitrary time series, and let $c$ be the maximum of $d_{A, B}(i, j)$ over all index pairs $(i, j)$ with $1 \leq i \leq |A|$ and $1 \leq j \leq |B|$.
This section designs the DTW distance sequence $S_{A, B}$ for time series $A$ and $B$, which is a sequence of $O(c|A||B|)$ integers that can be used to determine the DTW distance between any pair of $A[i_\vdash : i_\dashv]$ and $B[j_\vdash : j_\dashv]$ as the $[h_\vdash : h_\dashv]$-banded LIS length of $S_{A, B}[g_\vdash : g_\dashv]$ for certain indices $g_\vdash$, $g_\dashv$, $h_\vdash$, and $h_\dashv$.

To define the DTW distance sequence $S_{A,B}$ for $A$ and $B$, we first reduce the DTW distance problem to the HIS problem, and then to the LIS problem. The outline is as follows.

The reduction of the DTW distance problem to the HIS problem is done by introducing a sequence $R_{A, B}$ of $|A||B| + (|A| - 1)(|B| - 1)$ weighted integers, which are  
\begin{itemize}
\item integers $r(i, j)$ for all index pairs $(i, j)$ with $1 \leq i \leq |A|$ and $1 \leq j \leq |B|$, each weighted by $c - d_{A, B}(i, j)$, and 
\item integers $\tilde{r}(i, j)$ for all index pairs $(i, j)$ with $2 \leq i \leq |A|$ and $2 \leq j \leq |B|$, each weighted by $c$.
\end{itemize}
For any element $q$ in $R_{A, B}$, let $i_q = i$ and $j_q = j$, if $q = r(i, j)$, and for convenience, let $i_q = i - 0.5$ and $j_q = j - 0.5$, if $q = \tilde{r}(i, j)$.
For any alignment $P$ of $A[i_\vdash : i_\dashv]$ and $B[j_\vdash : j_\dashv]$, let $\hat{P}$ denote the sequence obtained from $P$ by replacing each element $(i_k, j_k)$ with $2 \leq k \leq |P|$ and $(i_k, j_k) = (i_{k - 1} + 1, j_{k - 1} + 1)$ by $\tilde{r}(i, j) \circ r(i, j)$ and each other element $(i_k, j_k)$ by $r(i, j)$.
Hence, for any consecutive elements $q$ and $q'$ in $\hat{P}$, all of $i_q \leq i_{q'}$, $j_q \leq j_{q'}$, and $(i_{q'} - i_q) + (j_{q'} - j_q) = 1$ hold.
From this observation, the length of $\hat{P}$ is $|A[i_\vdash : i_\dashv]| + |B[j_\vdash : j_\dashv]| + 1$, and hence the sum of the weight $w(q)$ over all integers $q$ in $\hat{P}$ is equal to $c(|A[i_\vdash : i_\dashv]| + |B[j_\vdash : j_\dashv]| + 1) - d_{A, B}(P)$.
See Figure~\ref{fig d}(b), in which the weights $w(q)$ for all elements $q$ in $R_{A, B}$ for the same dissimilarity function $d_{A, B}$ as (a) and sequence $\hat{P}$ for the same alignment $P$ as (a) are presented.
We will carefully define $R_{A, B}$ by specifying an integer value and occurrence position for each element $q$ in $R_{A, B}$ so that any alignment of $A[i_\vdash : i_\dashv]$ and $B[j_\vdash : j_\dashv]$ represents an $[r(i_\vdash, j_\vdash) : r(i_\dashv, j_\dashv)]$-banded maximal increasing subsequence of $R_{A, B}[f_\vdash : f_\dashv]$, and vice versa.
This immediately implies that $c(|A[i_\vdash : i_\dashv]| + |B[j_\vdash : j_\dashv]| + 1)$ minus the $[r(i_\vdash, j_\vdash) : r(i_\dashv, j_\dashv)]$-banded HIS weight of $R_{A, B}[f_\vdash : f_\dashv]$ represents the DTW distance between $A[i_\vdash : i_\dashv]$ and $B[j_\vdash : j_\dashv]$.

The DTW distance sequence $S_{A, B}$, which consists of unweighted integers, will be defined by blowing up each integer in $R_{A, B}$ based on its weight in a straightforward manner.

\subsection{Reduction of the DTW distance problem to the HIS problem}\label{sec R}

\begin{figure}[t]
\centering
\includegraphics[width=12cm]{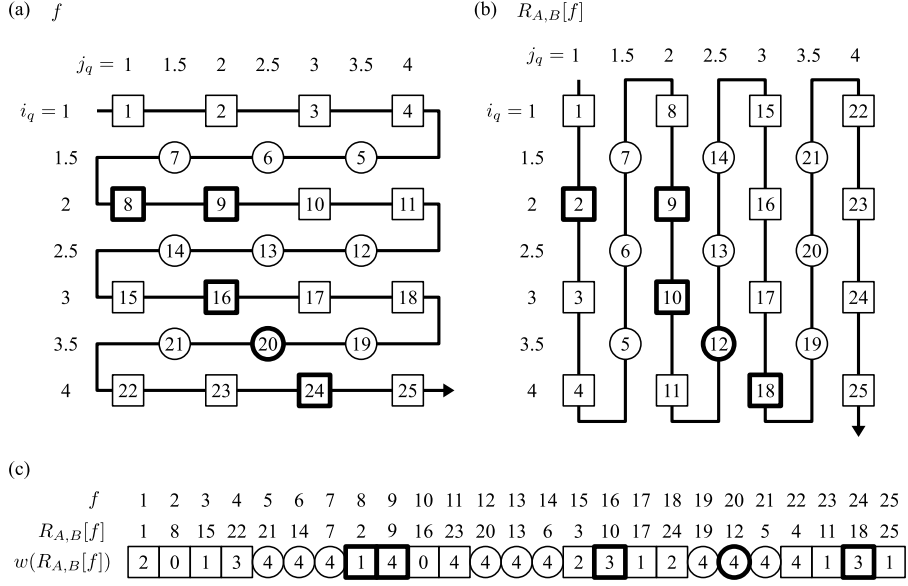}
\caption{
(a) The indices $f(q)$ for elements $q$ in $R_{A, B}$ for the same dissimilarity function as Figure~\ref{fig d}(a), to which indices $1,2,\dots,|R_{A, B}|$ are assigned in the order indicated by a winding arrow, where each boxed index represents $f(r(i, j))$ and each encircled index represents $f(\tilde{r}(i, j))$; (b) The integers of elements $q \ (= R_{A, B}[f(q)])$ in $R_{A, B}$, to which integers $1,2,\dots,|R_{A, B}|$ are assigned in the order indicated by a winding arrow, where each boxed integer represents $r(i, j)$ and each encircled integer represents $\tilde{r}(i, j)$; (c) Sequence $R_{A, B}$, in which the alignment $P$ in Figure~\ref{fig d}(a) corresponds to the $[1 : 18]$-banded increasing subsequence $R_{A, B}[8] \circ R_{A, B}[9] \circ R_{A, B}[16] \circ R_{A, B}[20] \circ R_{A, B}[24] = 2 \circ 9 \circ 10 \circ 12 \circ 18$ of $R_{A, B}[8 : 25]$.
}
\label{fig R}
\end{figure}

According to the outline explained above, we reduce the DTW distance problem for the dissimilarity function $d_{A, B}$ to the HIS problem for weighted integer sequence $R_{A, B}$ by appropriately assigning an integer value and occurrence position to each element $q$ in $R_{A, B}$.

For any element $q$ in $R_{A, B}$, let $f(q)$ denote the index such that $R_{A, B}[f(q)] = q$.
For simplicity, we sometimes use $f(i, j)$ to denote $f(r(i, j))$.

As $R_{A, B}$, we adopt the one such that each element appears in the ``row-wise forward-backward-alternating'' order on the grid arrangement shown in Figure~\ref{fig d}(a) and each element is the integer that indicates its rank in the ``column-wise forward-backward-alternating'' order shown in Figure~\ref{fig d}(b).
More formally, we define $f(q)$, $q$, and $w(q)$ for any element $q$ in $R_{A, B}$ as follows (see also Figure~\ref{fig R}(c) for a concrete example).

\begin{definition}\label{def R}
For any index pair $(i, j)$ with $1 \leq i \leq |A|$ and $1 \leq j \leq |B|$, let $f(r(i, j)) = (i - 1)(2|B| - 1) + j$, $r(i, j) = (j - 1)(2|A| - 1) + i$, and $w(r(i, j)) = c - d_{A, B}(i, j)$.
For any index pair $(i, j)$ with $2 \leq i \leq |A|$ and $2 \leq j \leq |B|$, let $f(\tilde{r}(i, j)) = (i - 1)(2|B| - 1) - j + 2$, $\tilde{r}(i, j) = (j - 1)(2|A| - 1) - i + 2$, and $w(\tilde{r}(i, j)) = c$.
\end{definition}

The correctness of our definition of $R_{A, B}$ is guaranteed from the following two lemmas.
The first lemma presents what condition $R_{A, B}$ should satisfy and the second lemma claims that $R_{A, B}$ satisfies the condition.

\begin{lemma}\label{lem condR}
Suppose that a subsequence $q \circ q'$ of $R_{A, B}$ is a maximal increasing subsequence of $R_{A, B}[f(q) : f(q')]$ if and only if all of $\lceil i_q \rceil \leq i_{q'}$, $\lceil j_q \rceil \leq j_{q'}$, and $(i_{q'} - i_q) + (j_{q'} - j_q) = 1$ hold.
Then a subsequence $Q$ of $R_{A, B}$ is an $[r(i_\vdash, j_\vdash) : r(i_\dashv, j_\dashv)]$-banded maximal increasing subsequence of $R_{A, B}[f(i_\vdash, j_\vdash) : f(i_\dashv, j_\dashv)]$ if and only if $Q = \hat{P}$ for some alignment $P$ of $A[i_\vdash : i_\dashv]$ and $B[j_\vdash : j_\dashv]$.
\end{lemma}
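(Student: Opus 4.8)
The plan is to prove the biconditional of Lemma~\ref{lem condR} by reducing the global statement about $Q$ to the local condition hypothesized for consecutive pairs $q \circ q'$. First I would unpack what it means for $Q = q_1 \circ q_2 \circ \cdots \circ q_\ell$ to be a maximal increasing subsequence of $R_{A, B}[f(i_\vdash, j_\vdash) : f(i_\dashv, j_\dashv)]$: by definition of maximality, no element of $R_{A, B}$ lying strictly between two consecutive chosen elements (in index order) can be inserted while keeping the subsequence increasing, and likewise nothing can be prepended after $f(i_\vdash, j_\vdash)$ or appended before $f(i_\dashv, j_\dashv)$. The key observation is that maximality is a purely \emph{local} property here: $Q$ is maximal in the window if and only if every consecutive pair $q_k \circ q_{k+1}$ is itself a maximal increasing subsequence of $R_{A, B}[f(q_k) : f(q_{k+1})]$, together with the boundary conditions pinning the endpoints to $r(i_\vdash, j_\vdash)$ and $r(i_\dashv, j_\dashv)$. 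Establishing this local-to-global equivalence for maximality is the first step, and it lets me invoke the hypothesized characterization of maximal consecutive pairs directly.

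Next I would translate the geometric step conditions $\lceil i_q \rceil \le i_{q'}$, $\lceil j_q \rceil \le j_{q'}$, and $(i_{q'} - i_q) + (j_{q'} - j_q) = 1$ into the alignment language. Recall from the construction preceding the lemma that $\hat{P}$ is built from an alignment $P$ by expanding each diagonal step $(i_{k-1}+1, j_{k-1}+1)$ into $\tilde{r}(i, j) \circ r(i, j)$ and each horizontal or vertical step into a single $r$-element, and that consecutive elements of $\hat{P}$ satisfy exactly $i_q \le i_{q'}$, $j_q \le j_{q'}$, and $(i_{q'} - i_q) + (j_{q'} - j_q) = 1$. The plan is to verify that a step satisfying the lemma's conditions corresponds to precisely one of the four admissible micro-moves: $r(i,j) \to r(i, j+1)$ (horizontal), $r(i,j) \to r(i+1, j)$ (vertical), $r(i,j) \to \tilde{r}(i+1, j+1)$ (first half of a diagonal), and $\tilde{r}(i+1,j+1) \to r(i+1, j+1)$ (second half). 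The role of the ceiling functions is the crux of this bookkeeping: because $\tilde{r}(i,j)$ carries coordinates $i - 0.5, j - 0.5$, the asymmetry between $\lceil i_q \rceil \le i_{q'}$ and the raw inequalities $i_q \le i_{q'}$ is exactly what forces a $\tilde{r}$-element to be immediately followed by its partner $r$-element and forbids illegal transitions out of a $\tilde{r}$-element, so I would check each of the finitely many coordinate patterns to confirm the correspondence is a bijection.

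With the step correspondence in hand, both directions follow by induction along the subsequence. For the ``if'' direction, given an alignment $P$ of $A[i_\vdash : i_\dashv]$ and $B[j_\vdash : j_\dashv]$, I would show that each consecutive pair in $\hat{P}$ satisfies the local geometric condition, hence by the hypothesis is a maximal increasing consecutive pair, hence by the local-to-global claim $\hat{P}$ is a maximal increasing subsequence; the banding $[r(i_\vdash, j_\vdash) : r(i_\dashv, j_\dashv)]$ follows from the endpoints of $P$ and the monotonicity of the integer values along any admissible walk. For the ``only if'' direction, starting from a banded maximal increasing subsequence $Q$ whose first element must be $r(i_\vdash, j_\vdash)$ and last $r(i_\dashv, j_\dashv)$ (forced by the band and by maximality at the boundary), I would apply the hypothesis to each consecutive pair to conclude each is an admissible micro-move, then read off the underlying alignment $P$ by contracting each $\tilde{r}(i,j) \circ r(i,j)$ back to a diagonal step and each lone $r$-element to its horizontal or vertical step, and verify $\hat{P} = Q$.

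I expect the main obstacle to be the local-to-global reduction of maximality and the careful verification that the boundary behaves correctly. Maximality is a subtle notion because inserting an element between two consecutive chosen elements could in principle require it to be simultaneously greater than the predecessor and less than the successor in \emph{value} while lying between them in \emph{index}; proving that no such insertion is possible exactly when each consecutive pair is individually maximal requires knowing that the value ordering and index ordering of $R_{A, B}$ interact so that a gap in one consecutive pair cannot be ``filled'' by borrowing slack from a neighboring pair. The winding ``forward-backward-alternating'' layout of Definition~\ref{def R} is designed precisely to make this hold, so the delicate part is confirming that the explicit formulas for $f$ and the integer values force the equivalence between the single global maximality statement and the conjunction of local ones, rather than this being an automatic consequence.
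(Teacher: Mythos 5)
Your proposal is correct and follows essentially the same route as the paper's proof, which likewise reduces both sides of the biconditional to pinned endpoints ($Q[1] = r(i_\vdash, j_\vdash)$, $Q[|Q|] = r(i_\dashv, j_\dashv)$) plus a per-consecutive-pair condition (local maximality on one side, the geometric step condition on the other) and then invokes the hypothesis. The only remark worth making is that the local-to-global maximality equivalence you flag as the delicate point is in fact automatic---inserting an element into an increasing subsequence is an inherently local operation, requiring index and value strictly between those of two consecutive chosen elements---so the explicit formulas of Definition~\ref{def R} are never needed inside this (conditional) lemma; they do the real work only in Lemma~\ref{lem R}.
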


\begin{proof}
Since $R_{A, B}[f(i_\vdash, j_\vdash)] = r(i_\vdash, j_\vdash)$ and $R_{A, B}[f(i_\dashv, j_\dashv)] = r(i_\dashv, j_\dashv)$, it is easy to verify that a subsequence $Q$ of $R_{A, B}$ is an $[r(i_\vdash, j_\vdash) : r(i_\dashv, j_\dashv)]$-banded maximal increasing subsequence of $R_{A, B}[f(i_\vdash, j_\vdash) : f(i_\dashv, j_\dashv)]$ if and only if $Q[1] = r(i_\vdash, j_\vdash)$, $Q[|Q|] = r(i_\dashv, j_\dashv)$, and any contiguous subsequence $q \circ q'$ of $Q$ is a maximal increasing subsequence of $R_{A, B}[f(q) : f(q')]$.
It is also easy to verify that $Q = \hat{P}$ for some alignment $P$ of $A[i_\vdash : i_\dashv]$ and $B[j_\vdash : j_\dashv]$ if and only if $Q[1] = r(i_\vdash, j_\vdash)$, $Q[|Q|] = r(i_\dashv, j_\dashv)$, and any contiguous subsequence $q \circ q'$ of $Q$ satisfies that all of $\lceil i_q \rceil \leq i_{q'}$, $\lceil j_q \rceil \leq j_{q'}$, and $(i_{q'} - i_q) + (j_{q'} - j_q) = 1$ hold.
The lemma follows from the above facts.
\end{proof}

\begin{lemma}\label{lem R}
A subsequence $q \circ q'$ of $R_{A, B}$ is a maximal increasing subsequence of $R_{A, B}[f(q) : f(q')]$ if and only if all of $\lceil i_q \rceil \leq i_{q'}$, $\lceil j_q \rceil \leq j_{q'}$, and $(i_{q'} - i_q) + (j_{q'} - j_q) = 1$ hold.
\end{lemma}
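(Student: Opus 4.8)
The plan is to recast the statement geometrically. First I would observe, directly from Definition~\ref{def R}, that within each super-column the $\tilde{r}$- and $r$-values form a block of consecutive integers and that these blocks tile $\{1,\dots,|R_{A,B}|\}$, so the values of $R_{A,B}$ are a permutation of $\{1,\dots,|R_{A,B}|\}$ and hence pairwise distinct. Identifying each element $q$ with the plane point $(f(q),q)$, this lets me restate the claim as: $q\circ q'$ is a maximal increasing subsequence of $R_{A,B}[f(q):f(q')]$ if and only if $f(q)<f(q')$, $q<q'$, and no element of $R_{A,B}$ has its position strictly between $f(q)$ and $f(q')$ while its value is strictly between $q$ and $q'$ -- equivalently, $q'$ is a minimal element dominating $q$ in the position--value order. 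Here the reduction to an empty open rectangle uses that $q$ and $q'$ are the first and last entries of $R_{A,B}[f(q):f(q')]$, so any extending element must be inserted strictly between them, and distinctness of values turns the increasing condition into membership in the open value interval $(q,q')$.

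For the forward (``if'') direction I would evaluate the three right-hand conditions on the half-integer coordinates $(i_q,j_q)$, using that $\lceil i_q\rceil$ equals the first index of $q$ whether $q=r(\cdot,\cdot)$ or $q=\tilde{r}(\cdot,\cdot)$. Splitting into the four type combinations for the ordered pair $(q,q')$ shows that the conditions hold exactly for the four local moves $r(a,b)\to r(a{+}1,b)$, $r(a,b)\to r(a,b{+}1)$, $r(a,b)\to\tilde{r}(a{+}1,b{+}1)$, and $\tilde{r}(a,b)\to r(a,b)$. For each of these I would check emptiness of the rectangle from Definition~\ref{def R}: the first two are immediate since $q'=q+1$ and $f(q')=f(q)+1$ respectively, while for the diagonal move and the $\tilde{r}\to r$ move I would enumerate the $O(|B|)$ elements whose position lies strictly between $f(q)$ and $f(q')$ and verify from the value formula that each has value outside $(q,q')$.

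The reverse (``only if'') direction I plan to prove by minimality: I would show that every $q'$ dominating $q$ either equals one of the above move-targets or strictly dominates one of them, the latter placing a move-target inside the rectangle and contradicting emptiness. When $q=\tilde{r}(a,b)$ the only move-target is $r(a,b)$, and the explicit lists of elements lying between $\tilde{r}(a,b)$ and $r(a,b)$ in super-row $a$ (by position) and super-column $b$ (by value) show that each such element fails to dominate $\tilde{r}(a,b)$; hence any $q'\ne r(a,b)$ dominating $\tilde{r}(a,b)$ must also dominate $r(a,b)$ and so is not minimal. When $q=r(a,b)$, every dominating $q'$ has super-row at least $a$ and super-column at least $b$; if it shares the super-column of $q$ it dominates or equals $r(a{+}1,b)$, if it shares the super-row it dominates or equals $r(a,b{+}1)$, and otherwise I would compare it with $\tilde{r}(a{+}1,b{+}1)$. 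The position--value duality exchanging the roles of $A$ and $B$ (equivalently $i$ and $j$), which swaps the first two moves and fixes the latter two, would halve this last analysis.

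I expect the main obstacle to be a single subcase of that last comparison: when $q'=\tilde{r}(a',b')$ lies just off the main diagonal -- say $a'=a{+}1$ but $b'>b{+}1$, or dually $b'=b{+}1$ but $a'>a{+}1$ -- it need not dominate $\tilde{r}(a{+}1,b{+}1)$, so one must instead exhibit the off-diagonal move-target $r(a,b{+}1)$ (resp.\ $r(a{+}1,b)$) as the point inside the rectangle. Confirming that this target's position and value both fall strictly between those of $q$ and $q'$ reduces, after substituting the formulas of Definition~\ref{def R}, to the inequalities $b+b'<2|B|$ (resp.\ $a+a'<2|A|$), which hold precisely because the off-diagonal placement of $q'$ forces $b'\le|B|$ with $b\le|B|-2$ (resp.\ $a'\le|A|$ with $a\le|A|-2$). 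Marshalling these boundary-sensitive inequalities, together with the degenerate cases where $q$ lies on the last row or column and some move-targets do not exist, is the delicate part of the argument.
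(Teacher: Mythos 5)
Your proposal is correct --- the permutation/empty-rectangle reformulation of maximality, the classification of the pairs satisfying the right-hand conditions into the four local moves, the enumeration-based emptiness checks, and the minimality argument (including the off-diagonal subcase where $r(a,b{+}1)$ or $r(a{+}1,b)$ must replace $\tilde{r}(a{+}1,b{+}1)$ as the witness) all go through --- but it is organized quite differently from the paper's proof, and more laboriously. The paper factors the argument through three facts stated purely in the half-integer grid coordinates $(i_q, j_q)$: (i) $q\circ q'$ is increasing if and only if $\lceil i_q\rceil\le i_{q'}$ and $\lceil j_q\rceil\le j_{q'}$; (ii) any distinct pair satisfying these dominance conditions has $(i_{q'}-i_q)+(j_{q'}-j_q)\ge 1$; and (iii) if this sum exceeds $1$, some element $q''$ is dominance-sandwiched between $q$ and $q'$ with $(i_{q''}-i_q)+(j_{q''}-j_q)=1$. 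Once (i) is in hand, the ``conditions imply maximal'' direction needs no enumeration at all: an insertable element would, by (i) and (ii), split the unit $L_1$-distance into two summands each at least $1$, a contradiction. Your proof does not isolate (i) as a standalone equivalence and does not exploit this additivity, so you verify emptiness of the four rectangles by hand; likewise your reverse direction re-derives (iii), but in position--value coordinates rather than grid coordinates, which is exactly what produces the boundary-sensitive inequalities ($b+b'<2|B|$ and its dual) you flag as the delicate part --- in the paper's grid-coordinate formulation the same step reduces to checking which of the three candidates at $L_1$-distance $1$ from $q$ is dominated by $q'$, and the half-integer coordinates absorb all boundary issues. The trade-off: the paper's decomposition buys brevity and uniformity, while yours buys explicitness (the four moves make the correspondence with alignment steps transparent, and every failure of maximality is certified by a concrete witness) at the cost of substantially heavier case analysis.
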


\begin{proof}
From Definition~\ref{def R}, it is easy to verify that a subsequence $q \circ q'$ of $R_{A, B}$ is increasing (i.e., both $f(q) < f(q')$ and $q < q'$ hold) if and only if both $\lceil i_q \rceil \leq i_{q'}$ and $\lceil j_q \rceil \leq j_{q'}$ hold.
On the other hand, $(i_{q'} - i_q) + (j_{q'} - j_q) \geq 1$ holds for any distinct elements $q$ and $q'$ in $R_{A, B}$ with $\lceil i_q \rceil \leq i_{q'}$ and $\lceil j_q \rceil \leq j_{q'}$.
Furthermore, if $(i_{q'} - i_q) + (j_{q'} - j_q) > 1$, then there exists at least an element $q''$ in $R_{A, B}$ with $\lceil i_q \rceil \leq i_{q''}$, $\lceil j_q \rceil \leq j_{q''}$, $\lceil i_{q''} \rceil \leq i_{q'}$, and $\lceil j_{q''} \rceil \leq j_{q'}$ such that $(i_{q''} - i_q) + (j_{q''} - j_q) = 1$.
These facts immediately yield the lemma.
\end{proof}

Now we have the following theorem, presenting a reduction of the DTW distance problem to the HIS weight problem.

\begin{theorem}\label{theo R}
The DTW distance between $A[i_\vdash : i_\dashv]$ and $B[j_\vdash : j_\dashv]$ can be calculated as $c(|A[i_\vdash : i_\dashv]| + |B[j_\vdash : j_\dashv]| + 1)$ minus the $[r(i_\vdash, j_\vdash) : r(i_\dashv, j_\dashv)]$-banded HIS weight of $R_{A, B}[f(i_\vdash, j_\vdash) : f(i_\dashv, j_\dashv)]$.
\end{theorem}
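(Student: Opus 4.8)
The plan is to establish a bijection between alignments $P$ and certain banded maximal increasing subsequences of $R_{A,B}$, and then to translate this bijection into an equation relating the DTW distance to the HIS weight. The two preceding lemmas do almost all of the structural work, so the proof of Theorem~\ref{theo R} should be a short assembly step rather than a fresh argument.

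First I would invoke Lemma~\ref{lem R}, which verifies that a length-two subsequence $q \circ q'$ of $R_{A,B}$ is a maximal increasing subsequence of $R_{A,B}[f(q):f(q')]$ exactly when the geometric conditions $\lceil i_q \rceil \leq i_{q'}$, $\lceil j_q \rceil \leq j_{q'}$, and $(i_{q'}-i_q)+(j_{q'}-j_q)=1$ hold. This is precisely the hypothesis required by Lemma~\ref{lem condR}. Feeding Lemma~\ref{lem R} into Lemma~\ref{lem condR} therefore yields the key correspondence: a subsequence $Q$ of $R_{A,B}$ is an $[r(i_\vdash,j_\vdash):r(i_\dashv,j_\dashv)]$-banded maximal increasing subsequence of $R_{A,B}[f(i_\vdash,j_\vdash):f(i_\dashv,j_\dashv)]$ if and only if $Q=\hat{P}$ for some alignment $P$ of $A[i_\vdash:i_\dashv]$ and $B[j_\vdash:j_\dashv]$.

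Next I would carry over the weight computation already established in the outline of Section~\ref{sec reduction}. There it is shown that for any alignment $P$, the sequence $\hat{P}$ has length $|A[i_\vdash:i_\dashv]|+|B[j_\vdash:j_\dashv]|+1$ and the total weight $w(\hat{P})$ equals $c(|A[i_\vdash:i_\dashv]|+|B[j_\vdash:j_\dashv]|+1)-d_{A,B}(P)$. Combining this with the bijection, the set of weights of the relevant banded maximal increasing subsequences of $R_{A,B}[f(i_\vdash,j_\vdash):f(i_\dashv,j_\dashv)]$ coincides with the set of values $c(|A[i_\vdash:i_\dashv]|+|B[j_\vdash:j_\dashv]|+1)-d_{A,B}(P)$ over all alignments $P$. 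Since the constant $c(|A[i_\vdash:i_\dashv]|+|B[j_\vdash:j_\dashv]|+1)$ is independent of $P$, maximizing $w(\hat{P})$ is equivalent to minimizing $d_{A,B}(P)$; the maximum over the banded maximal increasing subsequences is thus the $[r(i_\vdash,j_\vdash):r(i_\dashv,j_\dashv)]$-banded HIS weight, and the minimum of $d_{A,B}(P)$ is by definition the DTW distance. Rearranging gives the claimed identity.

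The one point requiring care, which I expect to be the main (if minor) obstacle, is the step where the HIS weight is identified with the maximum weight over \emph{maximal} increasing subsequences. The HIS weight is defined as a maximum over \emph{all} banded increasing subsequences, not only the maximal ones, whereas the bijection of Lemma~\ref{lem condR} concerns maximal ones. Here I would appeal to the remark at the end of Section~\ref{sec pre} that, because all weights $w(q)$ are nonnegative, there is always an HIS that is maximal; extending a non-maximal increasing subsequence to a maximal one cannot decrease its weight. Hence restricting to maximal increasing subsequences does not change the HIS weight, and the identification is justified. Everything else is routine substitution.
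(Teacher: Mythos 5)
Your proposal is correct and takes essentially the same approach as the paper's own proof: it assembles Lemma~\ref{lem R} into Lemma~\ref{lem condR} to get the alignment-to-maximal-banded-increasing-subsequence correspondence, applies the weight identity $w(\hat{P}) = c(|A[i_\vdash : i_\dashv]| + |B[j_\vdash : j_\dashv]| + 1) - d_{A,B}(P)$ from the reduction outline, and resolves the maximal-versus-arbitrary subsequence issue via the existence of a maximal HIS guaranteed by the non-negativity of the weights. The paper's one-sentence proof is precisely a compression of these steps, opening with the same ``there exists a banded HIS that is maximal'' observation that you correctly single out as the one point requiring care.
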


\begin{proof}
Recalling that there exists an $[r(i_\vdash, j_\vdash) : r(i_\dashv, j_\dashv)]$-banded HIS of $R_{A, B}[f(i_\vdash, j_\vdash) : f(i_\dashv, j_\dashv)]$ that is maximal, the theorem follows from Lemmas~\ref{lem condR} and \ref{lem R}, together with the outline of our reduction mentioned earlier.
\end{proof}

Since $q < r(1, j_\vdash) \leq r(i_\vdash, j_\vdash)$ for any element $q$ in $R_{A, B}$ with $f(i_\vdash, 1) \leq f(q) < f(i_\vdash, j_\vdash)$, the resulting Theorem~\ref{theo R} after replacing $f(i_\vdash, j_\vdash)$ with $f(i_\vdash, 1)$ still holds.
Similarly, $f(i_\dashv, j_\dashv)$, $r(i_\vdash, j_\vdash)$, and $r(i_\dashv, j_\dashv)$ in the theorem can respectively be replaced with $f(i_\dashv, |B|)$, $r(1, j_\vdash)$, and $r(|A|, j_\dashv)$, to eventually obtain the following corollary.

\begin{corollary}\label{cor R}
The DTW distance between $A[i_\vdash : i_\dashv]$ and $B[j_\vdash : j_\dashv]$ can be calculated as $c(|A[i_\vdash : i_\dashv]| + |B[j_\vdash : j_\dashv]| + 1)$ minus the $[r(1, j_\vdash) : r(|A|, j_\dashv)]$-banded HIS weight of $R_{A, B}[f(i_\vdash, 1) : f(i_\dashv, |B|)]$.
\end{corollary}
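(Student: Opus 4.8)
The plan is to derive the corollary from Theorem~\ref{theo R} by carrying out the four one-sided substitutions announced in the preceding paragraph, one at a time, and verifying that none of them changes the relevant banded HIS weight. Write $[a : b]$ for the current substring window and $\ell, u$ for the current band endpoints; at every stage the elements eligible to appear in an $[\ell : u]$-banded increasing subsequence of $R_{A, B}[a : b]$ are exactly those $q$ with $a \leq f(q) \leq b$ and $\ell \leq q \leq u$, and the banded HIS weight is a function of this eligible set alone, i.e.\ of the positions, values, and weights of its members. Hence it suffices to show that each substitution leaves the eligible set unchanged. I would first perform the two window extensions (replacing $f(i_\vdash, j_\vdash)$ by $f(i_\vdash, 1)$ and $f(i_\dashv, j_\dashv)$ by $f(i_\dashv, |B|)$) while the band is still $[r(i_\vdash, j_\vdash) : r(i_\dashv, j_\dashv)]$, and then the two band relaxations (replacing $r(i_\vdash, j_\vdash)$ by $r(1, j_\vdash)$ and $r(i_\dashv, j_\dashv)$ by $r(|A|, j_\dashv)$) while the window is already $[f(i_\vdash, 1) : f(i_\dashv, |B|)]$.

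For the extensions I would read off from Definition~\ref{def R} that enlarging the window on the left from $f(i_\vdash, j_\vdash)$ to $f(i_\vdash, 1)$ adjoins precisely the entries $r(i_\vdash, 1), \ldots, r(i_\vdash, j_\vdash - 1)$ (the $\tilde{r}$-entries of row $i_\vdash$ have strictly smaller index and are not reached), and that each such $r(i_\vdash, j)$ has value $(j - 1)(2|A| - 1) + i_\vdash \leq (j_\vdash - 2)(2|A| - 1) + |A| < (j_\vdash - 1)(2|A| - 1) + 1 = r(1, j_\vdash) \leq r(i_\vdash, j_\vdash)$. Thus every adjoined entry lies strictly below the band floor and is ineligible, so the eligible set is untouched. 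The symmetric computation for the right extension adjoins $r(i_\dashv, j_\dashv + 1), \ldots, r(i_\dashv, |B|)$, each of value at least $j_\dashv(2|A| - 1) + i_\dashv > (j_\dashv - 1)(2|A| - 1) + |A| = r(|A|, j_\dashv) \geq r(i_\dashv, j_\dashv)$, hence strictly above the band ceiling and again ineligible.

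For the relaxations the crucial observation is that lowering the band floor from $r(i_\vdash, j_\vdash)$ to $r(1, j_\vdash)$ newly admits only the value interval $[(j_\vdash - 1)(2|A| - 1) + 1, (j_\vdash - 1)(2|A| - 1) + i_\vdash)$. Here I would use the arithmetic of Definition~\ref{def R}: the value $r(i, j) = (j - 1)(2|A| - 1) + i$ places the entries of column $j$ in a window of width $|A|$ beginning at $(j - 1)(2|A| - 1) + 1$, with successive columns spaced $2|A| - 1$ apart, so the only entries whose value lands in the admitted interval are $r(1, j_\vdash), \ldots, r(i_\vdash - 1, j_\vdash)$, every $\tilde{r}$-value falling in the complementary gaps. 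All of these occupy rows $i < i_\vdash$ and therefore lie outside the current window $[f(i_\vdash, 1) : f(i_\dashv, |B|)]$, so no genuinely new entry becomes eligible. The symmetric argument for raising the band ceiling to $r(|A|, j_\dashv)$ admits only $r(i_\dashv + 1, j_\dashv), \ldots, r(|A|, j_\dashv)$, all in rows $i > i_\dashv$ and hence outside the window. Consequently all four substitutions preserve the eligible set and hence the banded HIS weight, and the corollary follows at once from Theorem~\ref{theo R}.

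I expect the main obstacle to be exactly the bookkeeping of this last step: one must confirm that the two value intervals opened up by the band relaxations are populated solely by column-$j_\vdash$ and column-$j_\dashv$ entries lying in rows outside $[i_\vdash, i_\dashv]$, and in particular that no $\tilde{r}$-entry and no entry of an adjacent column slips into them. This rests entirely on the width-$|A|$ windows and stride $2|A| - 1$ of the $r$-values together with the placement of the $\tilde{r}$-values in the gaps, so the whole argument reduces to checking the inequalities supplied by Definition~\ref{def R}; no idea beyond Theorem~\ref{theo R} is required.
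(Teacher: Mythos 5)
Your proof is correct and takes essentially the same route as the paper: the paper's own justification is exactly that each of the four replacements leaves the banded HIS weight unchanged, verified via the inequality $q < r(1, j_\vdash) \leq r(i_\vdash, j_\vdash)$ for the elements adjoined by the left window extension, with the other three replacements handled ``similarly.'' Your write-up simply makes explicit, via the eligible-set invariance and the stride-$(2|A|-1)$ arithmetic of Definition~\ref{def R}, the checks that the paper leaves to the reader.
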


Remark: Theorem~\ref{theo R} and Corollary~\ref{cor R} hold also for any dissimilarity function $d_{A, B}$ such that $d_{A, B}(i, j)$ is an arbitrary nonnegative real number, because the condition that $d_{A, B}(i, j)$ is an integer is not used to derive them.

\subsection{Reduction of the HIS problem to the LIS problem}\label{sec S}

\begin{figure}[t]
\centering
\includegraphics[width=12cm]{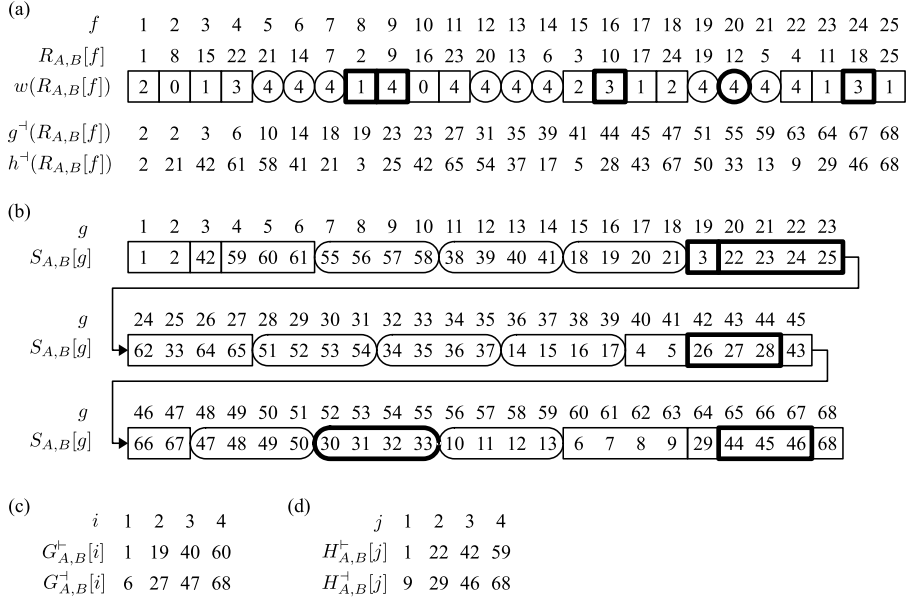}
\caption{
(a) Values $g_\dashv(R_{A, B}[f])$ and $h_\dashv(R_{A, B}[f])$ with $1 \leq f \leq |R_{A, B}|$; (b) Sequence $S_{A, B}$; (c) Arrays $G^\vdash_{A, B}$ and $G^\dashv_{A, B}$; (d) Arrays $H^\vdash_{A, B}$ and $H^\dashv_{A, B}$ for the same dissimilarity function $d_{A, B}$ as Figure~\ref{fig d}, where the alignment $P$ in Figure~\ref{fig d} corresponds to the $[1 : 46]$-banded increasing subsequence $S_{A, B}[19] \circ S_{A, B}[20 : 23] \circ S_{A, B}[42 ; 44] \circ S_{A, B}[52 : 55] \circ S_{A, B}[65 : 67] = 3 \circ 22 \circ 23 \circ 24 \circ 25 \circ 26 \circ 27 \circ 28 \circ 30 \circ 31 \circ 32 \circ 33 \circ 44 \circ 45 \circ 46$ of $S_{A, B}[19 : 68]$.
}
\label{fig S}
\end{figure}

To complete the reduction of the DTW distance problem for the dissimilarity function $d_{A, B}$ to the LIS length problem for the DTW distance sequence $S_{A, b}$, we reduce the HIS problem for $R_{A, B}$ to the LIS problem for $S_{A, B}$.
This is done in a straightforward manner, in which we transform $R_{A, B}$ to $S_{A, B}$ by replacing each element $q$ in $R_{A, B}$ with the sequence $S_q$ of $w(q)$ consecutive integers starting from $1$, if $q = 1 \ (= r(1, 1))$, or just after the last integer of $S_{q - 1}$, otherwise.
Since $R_{A, B}$ consists of $|A||B| + (|A| - 1)(|B| - 1)$ integers each weighted by an integer between $0$ and $c$, the resulting $S_{A, B}$ consists of $O(c|A||B|)$ unweighted integers.
Formally, the DTW distance sequence and its auxiliary arrays, providing the indices corresponding to $f(i, 1)$, $f(i, |B|)$, $r(1, j)$, and $r(|A|, j)$ in Corollary~\ref{cor R}, are defined as follows.
See also Figure~\ref{fig S} for a concrete example.

\begin{definition}\label{def S}
For any integer $q$ in $R_{A, B}$, let $g_\dashv(q)$ (resp. $h_\dashv(q)$) be the sum of $w(q')$ over all elements $q'$ in $R_{A, B}$ such that $f(q') \leq f(q)$ (resp. $q' \leq q$).
Furthermore, let $g_\vdash(q) = g_\dashv(q) - w(q) + 1$, let $h_\vdash(q) = h_\dashv(q) - w(q) + 1$, and let $S_q$ be the sequence $h_\vdash(q) \circ (h_\vdash(q) + 1) \circ \cdots \circ h_\dashv(q)$ of $w(q)$ consecutive integers.
Let the DTW distance sequence $S_{A, B}$ of $A$ and $B$ be the concatenation $S_{R_{A, B}[1]} \circ S_{R_{A, B}[2]} \circ \cdots \circ S_{R_{A, B}[|R_{A, B}|]}$.
As its auxiliary arrays, let $G^\vdash_{A, B}$ (resp. $G^\dashv_{A, B}$) be the array of $|A|$ indices $G^\vdash_{A, B}[i] = g_\vdash(r(i, 1))$ (resp. $G^\dashv_{A, B}[i] = g^\dashv(r(i, |B|))$ ) with $1 \leq i \leq |A|$, and let $H^\vdash_{A, B}$ (resp. $H^\dashv_{A, B}$) be the array of $|B|$ indices such that $H^\vdash_{A, B}[j] = h^\vdash(r(1, j))$ (resp. $H^\dashv_{A, B}[j] = h^\dashv(r(|A|, j))$) with $1 \leq j \leq |B|$.
\end{definition}

\begin{lemma}\label{lem S}
The $[r(1, j_\vdash) : r(|A|, j_\dashv)]$-banded HIS weight of $R_{A, B}[f(i_\vdash, 1) : f(i_\dashv, |B|)]$ is equal to the $[H^\vdash_{A, B}[j_\vdash] : H^\dashv_{A, B}[j_\dashv]]$-banded LIS length of $S_{A, B}[G^\vdash_{A, B}[i_\vdash] : G^\dashv_{A, B}[j^\dashv]]$.
\end{lemma}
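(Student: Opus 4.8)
The plan is to exhibit a length- and weight-preserving correspondence between the banded increasing subsequences of the relevant window of $R_{A, B}$ and those of the corresponding window of $S_{A, B}$, and then read off the claimed equality of the two optima. The whole argument rests on one structural fact about the blow-up in Definition~\ref{def S}: since the values $q$ occurring in $R_{A, B}$ are pairwise distinct (Definition~\ref{def R} assigns them the integers $1, \dots, |R_{A, B}|$ bijectively), and likewise the positions $f(q)$ are distinct, the quantities $h_\dashv$ and $g_\dashv$ are the partial sums of the weights taken in increasing order of value and of position, respectively. Consequently the value-intervals $[h_\vdash(q) : h_\dashv(q)]$ are pairwise disjoint and appear in increasing order of $q$, the position-intervals $[g_\vdash(q) : g_\dashv(q)]$ are pairwise disjoint and appear in increasing order of $f(q)$, each interval has length exactly $w(q)$, and the block $S_q$ sits at positions $[g_\vdash(q) : g_\dashv(q)]$ in $S_{A, B}$ while consisting precisely of the consecutive integers filling $[h_\vdash(q) : h_\dashv(q)]$. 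First I would use this to translate the two windows: the value band $[r(1, j_\vdash) : r(|A|, j_\dashv)]$ is exactly the union of the value-intervals of the elements $q$ whose value lies in it, so by Definition~\ref{def S} it corresponds to the integer band $[H^\vdash_{A, B}[j_\vdash] : H^\dashv_{A, B}[j_\dashv]]$ of $S_{A, B}$; symmetrically the position window $[f(i_\vdash, 1) : f(i_\dashv, |B|)]$ corresponds to the substring window $[G^\vdash_{A, B}[i_\vdash] : G^\dashv_{A, B}[i_\dashv]]$.

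For the inequality \textrm{HIS weight} $\le$ \textrm{LIS length} I would start from a $[r(1, j_\vdash) : r(|A|, j_\dashv)]$-banded HIS $T = q_1 \circ \cdots \circ q_m$ of $R_{A, B}[f(i_\vdash, 1) : f(i_\dashv, |B|)]$, whose values are strictly increasing (it is an increasing subsequence) and whose positions are increasing (it is a subsequence), and replace each $q_k$ by the entire block $S_{q_k}$. The concatenation is increasing: within a block the integers are consecutive, and across the boundary between $S_{q_k}$ and $S_{q_{k+1}}$ we have $h_\dashv(q_k) < h_\vdash(q_{k+1})$ because $q_k < q_{k+1}$ and the value-intervals are disjoint and value-ordered. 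By the window translation this subsequence is $[H^\vdash_{A, B}[j_\vdash] : H^\dashv_{A, B}[j_\dashv]]$-banded and lies inside $S_{A, B}[G^\vdash_{A, B}[i_\vdash] : G^\dashv_{A, B}[i_\dashv]]$, and its length is $\sum_k w(q_k) = w(T)$, the HIS weight.

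For the reverse inequality I would take an arbitrary banded increasing subsequence $U$ of the window of $S_{A, B}$ and group its entries by the block they come from. Because $S_{A, B}$ is the concatenation of the blocks in position order, the visited blocks appear in increasing order of $f$; and because the value-intervals are disjoint and value-ordered, the strict increase of $U$ forces the visited blocks to appear in strictly increasing order of value as well, so the sequence $T$ of visited blocks is a banded increasing subsequence of the corresponding window of $R_{A, B}$. Since $U$ can contribute at most $w(q)$ entries from block $S_q$, we get $|U| \le \sum_{q \in T} w(q) = w(T)$, which is at most the banded HIS weight. Taking the maximum over $U$ gives \textrm{LIS length} $\le$ \textrm{HIS weight}, and combining the two inequalities yields the lemma.

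I expect the main obstacle to be the cross-block step in the reverse direction: one must argue rigorously that an increasing subsequence of $S_{A, B}$ can never pick entries from two blocks in a value-non-monotone fashion. This is exactly where the disjointness and value-ordering of the intervals $[h_\vdash(q) : h_\dashv(q)]$ are indispensable, and care is needed at the window boundaries to confirm that the banding and substring restrictions translate with no off-by-one slack, since both windows are defined to begin and end precisely at block boundaries through $h_\vdash$, $h_\dashv$, $g_\vdash$, and $g_\dashv$.
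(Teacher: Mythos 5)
Your proof is correct, and it rests on the same basic correspondence as the paper's---each weighted element $q$ of $R_{A,B}$ becomes the block $S_q$ of $w(q)$ consecutive integers, with both windows aligned at block boundaries---but the way you close the argument is genuinely different. The paper never reasons about arbitrary increasing subsequences of $S_{A,B}$: it characterizes the \emph{maximal} ones (a consecutive pair $s \circ s'$ is maximal iff it is contiguous inside a single block $S_q$, or $s$ ends a block and $s'$ starts a block whose underlying pair $q \circ q'$ is a maximal increasing subsequence of $R_{A,B}[f(q):f(q')]$), concludes that the banded maximal increasing subsequences of the $S$-window are exactly the concatenations $S_{Q[1]} \circ \cdots \circ S_{Q[|Q|]}$ over banded maximal increasing subsequences $Q$ of the $R$-window, and then invokes the fact from Section~\ref{sec pre} that the HIS weight (and likewise the LIS length) is attained by a maximal subsequence. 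You instead prove two inequalities directly: your blow-up map for $\mathrm{HIS} \le \mathrm{LIS}$ coincides with the paper's correspondence in one direction, but for $\mathrm{LIS} \le \mathrm{HIS}$ you bound an \emph{arbitrary} banded increasing subsequence $U$ by grouping its entries per block and counting, $|U| \le \sum_q w(q)$---precisely the case (partial blocks) that the paper's argument excludes by restricting to maximal subsequences. What your route buys is self-containedness: it needs neither the maximality characterization nor the existence of maximal optima, and the counting bound is automatically robust to zero-weight elements, whose blocks $S_q$ are empty (worth stating explicitly: they contribute $0$ to both sides, and the chained inequality $h_\dashv(q_{k-1}) < h_\vdash(q_{k+1})$ survives skipping them). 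What the paper's route buys is brevity: the maximality machinery was already set up for Lemmas~\ref{lem condR} and~\ref{lem R}, so the same style of ``iff'' statement disposes of this lemma in a few lines.
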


\begin{proof}
For any element $s$ in $S_{A, B}$, let $f(s)$ denote the index such that $S_{A, B}[f(s)] = s$.
It follows from Definition~\ref{def S} that an increasing subsequence $s \circ s'$ of $S_{A, B}[f(s) : f(s')]$ is maximal if and only if either 
\begin{itemize}
\item $s \circ s'$ is a contiguous subsequence of $S_q$ for some element $q$ in $R_{A, B}$, or 
\item $s = g_\dashv(q)$ and $s' = g_\vdash(q')$ for some maximal increasing subsequence $q \circ q'$ of $R_{A, B}[f(q) : f(q')]$.
\end{itemize}
From this, it is easy to verify that $T$ is an $[H^\vdash_{A, B}[j_\vdash] : H^\dashv_{A, B}[j_\dashv]]$-banded maximal increasing subsequence of $S_{A, B}[G^\vdash_{A, B}[i_\vdash] : G^\dashv_{A, B}[j^\dashv]]$ if and only if $T = S_{Q[1]} \circ S_{Q[2]} \circ \cdots \circ S_{Q[|Q|]}$ for some $[r(1, j_\vdash) : r(|A|, j_\dashv)]$-banded maximal increasing subsequence $Q$ of $R_{A, B}[f(i_\vdash, 1) : f(i_\dashv, |B|)]$.
Since $|T| = w(Q)$ due to Definition~\ref{def S}, the lemma holds.
\end{proof}

Corollary~\ref{cor R} and Lemma~\ref{lem S} immediately complete our reduction of the DTW distance problem to the LIS length problem as follows.

\begin{theorem}\label{theo S}
The DTW distance between $A[i_\vdash : i_\dashv]$ and $B[j_\vdash : j_\dashv]$ can be calculated as $c(|A[i_\vdash : i_\dashv]| + |B[j_\vdash : j_\dashv]| + 1)$ minus the $[H^\vdash_{A, B}[j_\vdash] : H^\dashv_{A, B}[j_\dashv]]$-banded LIS length of $S_{A, B}[G^\vdash_{A, B}[i_\vdash] : G^\dashv_{A, B}[j^\dashv]]$.
\end{theorem}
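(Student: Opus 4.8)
The plan is to obtain Theorem~\ref{theo S} by directly composing the two results already in hand, Corollary~\ref{cor R} and Lemma~\ref{lem S}, which between them carry the DTW distance all the way down to a banded LIS length on $S_{A, B}$. First I would invoke Corollary~\ref{cor R}, which already writes the DTW distance between $A[i_\vdash : i_\dashv]$ and $B[j_\vdash : j_\dashv]$ as the offset $c(|A[i_\vdash : i_\dashv]| + |B[j_\vdash : j_\dashv]| + 1)$ minus the $[r(1, j_\vdash) : r(|A|, j_\dashv)]$-banded HIS weight of the window $R_{A, B}[f(i_\vdash, 1) : f(i_\dashv, |B|)]$. This reduces the entire task to re-expressing that single HIS weight as an LIS length.

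Next I would apply Lemma~\ref{lem S}, which asserts exactly that this HIS weight equals the corresponding banded LIS length of $S_{A, B}$. Substituting this equality into the formula from Corollary~\ref{cor R} leaves the additive offset untouched and replaces the subtracted HIS weight by the LIS length, producing verbatim the expression claimed in the theorem. No new combinatorial content is needed: the statement is a purely formal consequence of composing the HIS-to-LIS equation of Lemma~\ref{lem S} with the difference formula of Corollary~\ref{cor R}.

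The only point I would verify, rather than a genuine obstacle, is that the band and window endpoints match across the two statements. Concretely, the HIS band $[r(1, j_\vdash) : r(|A|, j_\dashv)]$ and window $[f(i_\vdash, 1) : f(i_\dashv, |B|)]$ of Corollary~\ref{cor R} must be precisely the objects whose images under the blow-up of Definition~\ref{def S} are the LIS band $[H^\vdash_{A, B}[j_\vdash] : H^\dashv_{A, B}[j_\dashv]]$ and window $[G^\vdash_{A, B}[i_\vdash] : G^\dashv_{A, B}[i_\dashv]]$. This is exactly what the auxiliary arrays were defined to record, through $G^\vdash_{A, B}[i] = g_\vdash(r(i, 1))$, $G^\dashv_{A, B}[i] = g_\dashv(r(i, |B|))$, $H^\vdash_{A, B}[j] = h_\vdash(r(1, j))$, and $H^\dashv_{A, B}[j] = h_\dashv(r(|A|, j))$, so Lemma~\ref{lem S} already matches Corollary~\ref{cor R} endpoint for endpoint and nothing further is required. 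All of the genuine work thus lives in the earlier lemmas, in particular Lemma~\ref{lem R} on the increasing-subsequence structure of $R_{A, B}$ and the maximality analysis carried out inside the proof of Lemma~\ref{lem S}.
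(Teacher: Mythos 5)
Your proposal is correct and matches the paper's own proof, which likewise obtains Theorem~\ref{theo S} as an immediate composition of Corollary~\ref{cor R} with Lemma~\ref{lem S}. Your additional check that the window and band endpoints line up through the auxiliary arrays $G^\vdash_{A, B}$, $G^\dashv_{A, B}$, $H^\vdash_{A, B}$, $H^\dashv_{A, B}$ is exactly the bookkeeping that Definition~\ref{def S} was set up to guarantee, so nothing further is needed.
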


It is well known that the dynamic programming (DP) algorithm determines the DTW distance between any subsequences $A[i_\vdash : i_\dashv]$ and $B[j_\vdash : j_\dashv]$ in $O((i_\dashv - i_\vdash)(j_\dashv - j_\vdash))$ time from scratch.
On the other hand, even if $c$ can be treated as a constant, it takes $O((i_\dashv - i_\vdash)(j_\dashv - j_\vdash) \log\log \ell)$ time~\cite{CP} to determine the $[H^\vdash_{A, B}[j_\vdash] : H^\dashv_{A, B}[j_\dashv]]$-banded LIS length of $S_{A, B}[G^\vdash_{A, B}[i_\vdash] : G^\dashv_{A, B}[i_\dashv]]$ from $S_{A, B}$, where $\ell$ is the banded LIS length to be determined.
Consequently, as long as we are in the situation where determining the DTW distance between any given subsequences $A[i_\vdash : i_\dashv]$ and $B[j_\vdash : j_\dashv]$ is required, naively using the DP algorithm is better than maintaining $S_{A, B}$ to apply Theorem~\ref{theo S}.
However, certain kinds of the DTW distance-related problems are relevant to the DTW distances between $A[i_\vdash : i_\dashv]$ and $B[j_\vdash : j_\dashv]$ only for restricted pairs of them, and in such cases, our elaborate representation of the DTW distances by the banded LIS lengths makes sense, as demonstrated in Section~\ref{sec app}.

\section{Applications}\label{sec app}

The reduction of the DTW distance problem to the LIS length problem proposed in Section \ref{sec reduction} becomes meaningful, when we apply the semi-local sequence comparison technique for a pair of sequences, developed by Tiskin~\cite{Tis}.
Here, by semi-local we mean that any pair of an arbitrary prefix of one sequence and an arbitrary suffix of the other or any pair of an arbitrary contiguous subsequence of one and the entire sequence of the other.
This technique was developed so as to be applicable to the longest common subsequence length problem and guarantees, for our particular case considering the banded LIS length, existence of the following useful permutation, which can be constructed efficiently.

\begin{lemma}[\cite{Tis}]\label{lem Tis}
For any pair of time series $A$ and $B$, there exists a permutation sequence $\Pi_{A, B}$ of integers from $1$ to $2|S_{A, B}|$ such that, for any pair of indices $k_\vdash$ and $k_\dashv$ with $1 \leq k_\vdash, k_\dashv \leq 2|S_{A, B}| - 1$, $\min(k_\dashv, |S_{A, B}|) - \max(0, k_\vdash - |S_{A, B}|)$ minus the number of indices $k$ with $k_\vdash + 1 \leq k \leq 2|S_{A, B}|$ and $1 \leq \Pi_{A, B}[k] \leq k_\dashv$ is equal to 
\begin{itemize}
\item the $[1 : k_\dashv]$-banded LIS length of $S_{A, B}[|S_{A, B}| - k_\vdash + 1 : |S_{A, B}|]$, if both $k_\vdash$ and $k_\dashv$ are less than or equal to $|S_{A, B}|$, 
\item the $[k_\vdash - |S_{A, B}| + 1 : |S_{A, B}|]$-banded LIS length of $S_{A, B}[1 : 2|S_{A, B}| - k_\dashv]$, if both $k_\vdash$ and $k_\dashv$ are greater than or equal to $|S_{A, B}|$, 
\item the $[k_\vdash - |S_{A, B}| + 1 : k_\dashv]$-banded LIS length of $S_{A, B}$, if $k_\dashv \leq |S_{A, B}| \leq k_\vdash$ and $k_\vdash - |S_{A, B}| + 1 \leq k_\dashv$, and 
\item the LIS length of $S_{A, B}[|S_{A, B}| - k_\vdash + 1 : 2|S_{A, B}| - k_\dashv]$, if $k_\vdash \leq |S_{A, B}| \leq k_\dashv$ and $|S_{A, B}| - k_\vdash + 1 \leq 2|S_{A, B}| - k_\dashv$.
\end{itemize}
\end{lemma}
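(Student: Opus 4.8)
The plan is to recognize this statement as the specialization of Tiskin's semi-local LCS theorem to the case where one of the two compared strings is a monotone integer run, and to match the four cases of the formula to the four quadrants of Tiskin's doubled index range.

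First I would observe that, by Definition~\ref{def S}, the blocks $S_q$ partition the integers $1, 2, \ldots, |S_{A, B}|$ into consecutive disjoint runs, so $S_{A, B}$ is a permutation of $\{1, \ldots, |S_{A, B}|\}$; write $N = |S_{A, B}|$. Because every value occurs exactly once, any $[h_\vdash : h_\dashv]$-banded increasing subsequence of a window $S_{A, B}[g_\vdash : g_\dashv]$ corresponds bijectively to a common subsequence of $S_{A, B}[g_\vdash : g_\dashv]$ and the monotone run $h_\vdash \circ (h_\vdash + 1) \circ \cdots \circ h_\dashv$, the latter being a contiguous substring of the identity sequence $\iota = 1 \circ 2 \circ \cdots \circ N$. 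Hence every banded LIS length appearing in the four cases is an LCS length between a contiguous substring of $S_{A, B}$ and a contiguous substring of $\iota$.

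Next I would apply Tiskin's semi-local LCS construction to the pair $(S_{A, B}, \iota)$, obtaining its critical-point (seaweed) permutation $\Pi_{A, B}$ on $\{1, \ldots, 2N\}$. Tiskin's theorem simultaneously encodes, against this single permutation, all prefix-suffix, suffix-prefix, substring-whole, and whole-substring LCS values of the two length-$N$ strings by one dominance-counting formula: the score equals a trivial leading term minus the number of seaweeds crossing the chosen cut. Interpreting $k_\vdash$ as the cut along $S_{A, B}$ and $k_\dashv$ as the cut along $\iota$, the position of $k_\vdash$ relative to $N$ selects a suffix, whole, or prefix of $S_{A, B}$, and the position of $k_\dashv$ relative to $N$ selects a prefix, whole, or suffix (band) of $\iota$; the four cases of the lemma are exactly the four resulting quadrants. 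In each quadrant the leading term $\min(k_\dashv, N) - \max(0, k_\vdash - N)$ is the length of the relevant substring of $\iota$ (the band width), and the subtracted quantity $\#\{k : k_\vdash + 1 \leq k \leq 2N,\ 1 \leq \Pi_{A, B}[k] \leq k_\dashv\}$ is precisely the seaweed-crossing count.

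The main obstacle I expect is the index bookkeeping rather than any conceptual difficulty: I must verify, quadrant by quadrant, that Tiskin's general formula collapses to exactly the stated substring ranges and that the boundary contributions $\min(k_\dashv, N)$ and $\max(0, k_\vdash - N)$ reproduce the correct band widths and starting positions under the offset conventions of Definition~\ref{def S}. A secondary point to check is that taking $\iota$, a strictly increasing (hence degenerate) string, as one argument introduces no pathology in Tiskin's construction; since the construction is valid for arbitrary strings and the critical points always form a permutation, $\Pi_{A, B}$ remains a genuine permutation of $\{1, \ldots, 2N\}$, completing the argument.
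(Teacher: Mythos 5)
Correct, and essentially the paper's own route: the paper gives no independent proof of this lemma, stating it as a direct specialization of Tiskin's semi-local LCS theorem \cite{Tis}, which is exactly what you reconstruct by observing that $S_{A,B}$ is a permutation of $\{1,\dots,|S_{A,B}|\}$ and that its banded LIS lengths over windows are LCS lengths against contiguous substrings of the identity sequence $1 \circ 2 \circ \cdots \circ |S_{A,B}|$. Your quadrant-by-quadrant matching of the four semi-local query types (suffix--prefix, prefix--suffix, whole--substring, substring--whole) to the four cases, with the leading term $\min(k_\dashv, |S_{A,B}|) - \max(0, k_\vdash - |S_{A,B}|)$ as the length of the relevant substring of the identity, is precisely the specialization the paper leaves implicit in the citation.
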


\begin{lemma}[\cite{Tis} with any of \cite{Tis2} or \cite{Sak}]\label{lem Tis const}
Sequence $\Pi_{A, B}$ in Lemma \ref{lem Tis} can be obtained from $S_{A, B}$ in $O(|S_{A, B}| \log^2 |S_{A, B}|)$ time and $O(|S_{A, B}|)$ space.
\end{lemma}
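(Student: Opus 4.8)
The plan is to treat Lemma~\ref{lem Tis const} as the algorithmic counterpart of Lemma~\ref{lem Tis} and to realize the permutation $\Pi_{A,B}$ as the \emph{seaweed} (highest-score) permutation from Tiskin's semi-local LCS framework, specialized to the LIS setting. First I would observe that, by Definition~\ref{def S}, the ranges $[h_\vdash(q) : h_\dashv(q)]$ partition $\{1, \ldots, |S_{A,B}|\}$ as $q$ runs over all elements of $R_{A,B}$, so every integer in $\{1, \ldots, |S_{A,B}|\}$ occurs exactly once in $S_{A,B}$; that is, $S_{A,B}$ is a permutation of $\{1, \ldots, |S_{A,B}|\}$. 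Consequently, any (banded) LIS of a contiguous subsequence of $S_{A,B}$ coincides with a (semi-local) LCS between that subsequence and a contiguous subsequence of the identity sequence $\iota = 1 \circ 2 \circ \cdots \circ |S_{A,B}|$. This is exactly the regime in which Tiskin's semi-local LCS is most efficient, because one of the two compared sequences is a permutation of distinct symbols.

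Second, I would recall that Tiskin's framework encodes all semi-local LCS values between two sequences of total length $N$ by a single permutation of $\{1, \ldots, 2N\}$ (equivalently, a permutation matrix that is \emph{simple unit-Monge}), from which any individual value is recovered by a dominance (rectangle) counting query. Lemma~\ref{lem Tis} is precisely this encoding written out for the four semi-local cases, with $\Pi_{A,B}$ the seaweed permutation and the quantity ``number of indices $k$ with $k_\vdash + 1 \leq k \leq 2|S_{A,B}|$ and $1 \leq \Pi_{A,B}[k] \leq k_\dashv$'' the corresponding dominance count. Establishing the present lemma therefore amounts to computing this seaweed permutation for the pair $(S_{A,B}, \iota)$ within the stated bounds.

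Third, the algorithm I would use is divide-and-conquer on $S_{A,B}$: split it into two halves, recursively compute the seaweed permutation of each half against the appropriate portion of $\iota$, and combine the two partial permutations by the implicit unit-Monge distance product (the ``Steady Ant'' multiplication of \cite{Tis2}, or the alternative of \cite{Sak}), which merges two seaweed braids of size $O(N)$ in $O(N \log N)$ time and $O(N)$ space while keeping the result an implicitly stored permutation. With recursion depth $O(\log |S_{A,B}|)$ and $O(|S_{A,B}| \log |S_{A,B}|)$ work per level, the total time is $O(|S_{A,B}| \log^2 |S_{A,B}|)$; reusing the working permutations and carrying only $O(\log |S_{A,B}|)$ recursion frames keeps the space at $O(|S_{A,B}|)$.

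The main obstacle is not in our reduction but in the two imported ingredients: the combinatorial fact that composing the seaweed braids of the two halves yields the seaweed braid of the whole (the correctness and associativity of the unit-Monge distance product), and the near-linear-time realization of that product. Both are exactly the contributions of \cite{Tis,Tis2,Sak}, so I would cite them rather than reprove them. The residual work on our side is purely bookkeeping: verifying that $S_{A,B}$ is a permutation so that LIS coincides with LCS-against-identity, and matching the index offsets of the four cases in Lemma~\ref{lem Tis} to the standard dominance-counting convention for the highest-score matrix.
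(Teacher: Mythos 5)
Your proposal is correct and matches the paper's treatment: the paper establishes this lemma purely by citation (Tiskin's semi-local/seaweed framework from \cite{Tis}, with the fast unit-Monge distance multiplication of \cite{Tis2} or \cite{Sak} supplying the $O(N\log N)$ merge inside the divide-and-conquer), which is exactly the argument you reconstruct. Your added bookkeeping --- observing that $S_{A,B}$ is a permutation of $\{1,\dots,|S_{A,B}|\}$ so that banded LIS becomes semi-local LCS against the identity, which is the permutation-string regime where the $O(N\log^2 N)$ bound applies --- is the right (and necessary) glue, and it is sound.
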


Once $\Pi_{A, B}$ is implemented as the two-dimensional range counting tree~\cite{Cha}, in $O(|S_{A, B}| \log |S_{A, B}|)$ time and $O(|S_{A, B}|)$ space, the number of indices $k$ with $k_\vdash + 1 \leq k \leq 2|S_{A, B}|$ and $1 \leq \Pi_{A, B}[k] \leq k_\dashv$ for any such pair of indices $k_\vdash$ and $k_\dashv$ can be determined in $O(\log |S_{A, B}|)$ time.

The following DTW distance-related problems are included in typical kinds of problems efficiently handled by the semi-local sequence comparison technique.
In what follows, we assume that any dissimilarity function takes as its value one of integers from $0$ to $c$ with $c = O(1)$, independent of the length of the target pair of time series.
Compared to a naive use of the DP algorithm, our reduction to the LIS problem allows us to solve the problems asymptotically faster by an almost linear factor.
The drawback of our reduction is its space-inefficiency.
The DP algorithm (with Hirschberg's divide-and-conquer technique~\cite{Hirschberg}) requires only linear space, while ours consumes quadratic space.
As a result, the algorithms we will propose for the problems based on our reduction technique balance execution speed and space consumption almost equally.

\subsection{The circular DTW distance problem}\label{sec cDTWp}

Given a pair of time series $A$ and $B$ with $|A| \leq |B|$, the circular DTW distance problem consists of determining the minimum of the DTW distance between $A'' \circ A'$ and $B$ over all partitions of $A$ into a prefix $A'$ and the remaining suffix $A''$, together with an arbitrary circular shift $A'' \circ A'$ of $A$ that achieves this minimum distance with $B$.
This problem may arise, for example, when we have a pair of daily temperature data for a year taken at different locations or environments and want to know the similarity and phase shift between them.

A naive algorithm solves the problem in $O(|A|^2 |B|)$ time and $O(|A|)$ space by determining the DTW distance between $A'' \circ A'$ and $B$ using the DP algorithm in $O(|A||B|)$ time for each partition of $A$ into $A' \circ A''$ and taking the minimum.
In contrast, if the two-dimensional range counting tree $T_{A \circ A, B}$ for $\Pi_{A \circ A, B}$ is available, then the problem can be solved in $O(|A| \log |B|)$ time by determining the DTW distance between $(A \circ A)[i : i + |A| - 1] \ (= A[i: |A|] \circ A[1 : i - 1])$ and $B$ in $O(\log |\Pi_{A \circ A, B}|)$ time for each index $i$ from $1$ to $|A|$ and taking the minimum.
Furthermore, $T_{A \circ A, B}$ can be constructed from scratch in $O(|A||B| \log^2 |B|)$ time.
Consequently, the following holds.

\begin{theorem}\label{theo cDTWp}
Given a pair of time series $A$ and $B$, the circular DTW distance problem can be solved in $O(|A||B| \log^2 |B|)$ time and $O(|A||B|)$ space.
\end{theorem}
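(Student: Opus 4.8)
The plan is to reduce the circular DTW distance problem to a family of semi-local LIS-length queries answered against a single precomputed structure. First I would set $C = A \circ A$ and observe that, for each $i$ with $1 \le i \le |A|$, the contiguous subsequence $C[i : i + |A| - 1]$ equals the circular shift $A[i : |A|] \circ A[1 : i - 1]$, so that letting $i$ range over $1, \ldots, |A|$ enumerates every circular shift of $A$ exactly once. Thus solving the problem amounts to computing, for each such $i$, the DTW distance between $C[i : i + |A| - 1]$ and the entire $B$, returning the minimum, and recording the index $i$ that attains it so that the achieving shift can be output.

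Next I would express each of these distances through our reduction. By Theorem~\ref{theo S}, the DTW distance between $C[i : i + |A| - 1]$ and $B = B[1 : |B|]$ equals $c(|A| + |B| + 1)$ minus the $[H^\vdash_{C, B}[1] : H^\dashv_{C, B}[|B|]]$-banded LIS length of $S_{C, B}[G^\vdash_{C, B}[i] : G^\dashv_{C, B}[i + |A| - 1]]$. Because the comparison is against the whole of $B$, the value band $[H^\vdash_{C, B}[1] : H^\dashv_{C, B}[|B|]]$ coincides with the full range $[1 : |S_{C, B}|]$ by Definition~\ref{def S}, so each distance is a constant minus the ordinary (unbanded) LIS length of a contiguous substring of $S_{C, B}$. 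This is exactly the quantity delivered by the unbanded-LIS case of Lemma~\ref{lem Tis}: choosing $k_\vdash$ and $k_\dashv$ so that $|S_{C, B}| - k_\vdash + 1 = G^\vdash_{C, B}[i]$ and $2|S_{C, B}| - k_\dashv = G^\dashv_{C, B}[i + |A| - 1]$, the LIS length is recovered from a single two-dimensional range count on $\Pi_{C, B}$. Implementing $\Pi_{C, B}$ as the range counting tree of~\cite{Cha} then answers each of the $|A|$ queries in $O(\log |S_{C, B}|)$ time.

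For the cost accounting I would use that $c = O(1)$ and $|A| \le |B|$, so that $S_{C, B}$ is assembled in $O(|S_{C, B}|)$ time with $|S_{C, B}| = O(c |C| |B|) = O(|A| |B|)$ and $\log |S_{C, B}| = O(\log |B|)$. Building $\Pi_{C, B}$ from $S_{C, B}$ costs $O(|S_{C, B}| \log^2 |S_{C, B}|) = O(|A| |B| \log^2 |B|)$ by Lemma~\ref{lem Tis const}, erecting the range counting tree adds only $O(|S_{C, B}| \log |S_{C, B}|) = O(|A| |B| \log |B|)$, and the $|A|$ queries contribute $O(|A| \log |B|)$, all within the claimed bound and using $O(|S_{C, B}|) = O(|A| |B|)$ space. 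The hard part will be the bookkeeping in the middle step: confirming that comparison with the entire $B$ collapses the value band to the full range, pinning down precisely which case of Lemma~\ref{lem Tis} applies, and translating the $G$-array endpoints into admissible parameters $k_\vdash, k_\dashv$ that meet that case's side conditions. Once this correspondence is verified, the stated time and space bounds follow at once.
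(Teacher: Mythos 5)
Your proposal is correct and follows essentially the same route as the paper: double $A$ into $A \circ A$ so each circular shift becomes a contiguous substring, build $S_{A\circ A, B}$ and the permutation $\Pi_{A\circ A, B}$ via Lemma~\ref{lem Tis const}, implement it as a two-dimensional range counting tree, and answer each of the $|A|$ substring-versus-whole-$B$ distance queries in $O(\log|B|)$ time before taking the minimum. Your additional bookkeeping --- verifying via Definition~\ref{def S} that comparison against the entire $B$ collapses the band to $[1 : |S_{A\circ A,B}|]$ and identifying the applicable (unbanded) case of Lemma~\ref{lem Tis} with explicit $k_\vdash, k_\dashv$ --- is a correct elaboration of details the paper leaves implicit.
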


\subsection{The square root DTW distance problem}\label{sec sqDTWp}

Given a time series $A$, the square root DTW distance problem is to find an arbitrary partition of $A$ into a prefix $A[1 : i]$ and the remaining suffix $A[i + 1, |A|]$ that minimizes the DTW distance between them and to determine this DTW distance.
This problem may arise, for example, when we want to test if a time series can be thought of as the concatenation of a pair of inexact copies of an unknown pattern.

Similarly to the case of the circular DTW distance problem, a naive algorithm can solve this problem in $O(|A|^3)$ time and $O(|A|)$ space by determining the DTW distance between $A'$ and $A''$ in $O(|A|^2)$ time based on the DP algorithm for each partition of $A$ into $A' \circ A''$ and taking the minimum.
In contrast, if the two-dimensional range counting tree $T_{A, A}$ for $\Pi_{A, A}$ is available, then the problem can be solved in $O(|A| \log |A|)$ time by determining the DTW distance between $A[1 : i]$ and $A[i + 1 : |A|]$ in $O(\log |\Pi_{A, A}|)$ time for each index $i$ from $1$ to $|A| - 1$ and taking the minimum.
Furthermore, $T_{A, A}$ can be constructed from scratch in $O(|A|^2 \log^2 |A|)$ time.

\begin{theorem}\label{theo sqDTWp}
Given a time series $A$, the square root DTW distance problem can be solved in $O(|A|^2 \log^2 |A|)$ time and $O(|A|^2)$ space.
\end{theorem}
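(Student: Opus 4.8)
The plan is to reduce every candidate DTW distance arising in the square root problem to a single banded LIS query on the DTW distance sequence $S_{A, A}$, and then to answer all such queries through the permutation $\Pi_{A, A}$ of Lemma~\ref{lem Tis}, implemented as a two-dimensional range counting tree. First I would invoke Theorem~\ref{theo S} with $B = A$ to write, for each split index $i$, the DTW distance between the prefix $A[1 : i]$ and the suffix $A[i + 1 : |A|]$ as $c(|A| + 1)$ minus the $[H^\vdash_{A, A}[i + 1] : H^\dashv_{A, A}[|A|]]$-banded LIS length of $S_{A, A}[G^\vdash_{A, A}[1] : G^\dashv_{A, A}[i]]$. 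Since the additive term $c(|A| + 1) = c(i + (|A| - i) + 1)$ does not depend on $i$, minimizing the DTW distance over all splits is equivalent to maximizing the corresponding banded LIS length, so it suffices to evaluate one banded LIS length per split, keep the largest, and report both the minimizing index and $c(|A| + 1)$ minus that maximum.

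The key step is to verify that each of these banded LIS queries has the special prefix/suffix shape that Lemma~\ref{lem Tis} can answer. Here I would read off the two boundary indices supplied by Definition~\ref{def S}: because $r(1, 1)$ is the first element of $R_{A, A}$ one has $G^\vdash_{A, A}[1] = 1$, and because $r(|A|, |A|)$ is the largest integer in $R_{A, A}$ one has $H^\dashv_{A, A}[|A|] = |S_{A, A}|$. Consequently the query asks for the $[H^\vdash_{A, A}[i + 1] : |S_{A, A}|]$-banded LIS length of the prefix $S_{A, A}[1 : G^\dashv_{A, A}[i]]$, which is exactly the second case of Lemma~\ref{lem Tis} under the substitution $k_\vdash = H^\vdash_{A, A}[i + 1] + |S_{A, A}| - 1$ and $k_\dashv = 2|S_{A, A}| - G^\dashv_{A, A}[i]$; the inequalities $k_\vdash \geq |S_{A, A}|$ and $k_\dashv \geq |S_{A, A}|$ required by that case follow from $H^\vdash_{A, A}[i + 1] \geq 1$ and $G^\dashv_{A, A}[i] \leq |S_{A, A}|$, while the range bounds $1 \leq k_\vdash, k_\dashv \leq 2|S_{A, A}| - 1$ hold similarly. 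With $T_{A, A}$ in hand, each such banded LIS length reduces to a single range counting query and is obtained in $O(\log |S_{A, A}|)$ time.

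It then remains to tally the cost. Sweeping $i$ from $1$ to $|A| - 1$ and performing one $O(\log |S_{A, A}|)$ query per split answers all candidate distances in $O(|A| \log |S_{A, A}|)$ time; since $c = O(1)$ we have $|S_{A, A}| = O(|A|^2)$ and hence $\log |S_{A, A}| = O(\log |A|)$, giving $O(|A| \log |A|)$ for the query phase. The dominant cost is the one-time preprocessing: by Lemma~\ref{lem Tis const} the permutation $\Pi_{A, A}$ is built in $O(|S_{A, A}| \log^2 |S_{A, A}|) = O(|A|^2 \log^2 |A|)$ time, and converting it into the range counting tree $T_{A, A}$ adds only $O(|S_{A, A}| \log |S_{A, A}|) = O(|A|^2 \log |A|)$ time, all within $O(|S_{A, A}|) = O(|A|^2)$ space. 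Summing the two phases yields the claimed $O(|A|^2 \log^2 |A|)$ time and $O(|A|^2)$ space bounds.

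The main obstacle I anticipate is bookkeeping rather than conceptual: I must confirm that the boundary indices $G^\vdash_{A, A}[1]$ and $H^\dashv_{A, A}[|A|]$ collapse to $1$ and $|S_{A, A}|$ respectively, and that the resulting query genuinely lands in the prefix/band-ending case of Lemma~\ref{lem Tis} rather than one of the other three, since a misidentification of the case would extract the wrong substring of $S_{A, A}$ and return an incorrect value. Once the correspondence between each split index $i$ and the parameters $(k_\vdash, k_\dashv)$ is pinned down, the rest is a routine assembly of the stated building blocks.
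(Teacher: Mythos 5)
Your proposal is correct and follows essentially the same route as the paper: build $S_{A,A}$, construct $\Pi_{A,A}$ via Lemma~\ref{lem Tis const} in $O(|A|^2\log^2|A|)$ time, implement it as the range counting tree $T_{A,A}$, and answer one $O(\log|A|)$-time banded LIS query per split index $i$, taking the minimum over $i$ from $1$ to $|A|-1$. The only difference is that you explicitly verify the bookkeeping the paper leaves implicit, namely that $G^\vdash_{A,A}[1]=1$ and $H^\dashv_{A,A}[|A|]=|S_{A,A}|$ so that each query lands in the second case of Lemma~\ref{lem Tis}, which is a correct and welcome elaboration rather than a deviation.
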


\subsection{The periodic DTW distance problem}\label{seq pDTWp}

Given a pair of time series $A$ and $B$ with $|A| \leq |B|$, the periodic DTW distance problem is to find an arbitrary quadruple consisting of a nonnegative integer $\ell$, a decomposition $B_0 \circ B_1 \circ \cdots \circ B_\ell$ of $B$, and indices $i_\vdash$ and $i_\dashv$ with $1 \leq i_\vdash, i_\dashv \leq |A|$ that minimizes $\mathrm{DTW}(A[i_\vdash : i_\dashv], B)$, if $\ell = 0$, or $\mathrm{DTW}(A[i_\vdash : |A|], B_0) + \sum_{k = 1}^{\ell - 1} \mathrm{DTW}(A, B_k) + \mathrm{DTW}(A[1 : i_\dashv], B_\ell)$, otherwise.
Here, $\mathrm{DTW}(A', B')$ denotes the DTW distance between $A'$ and $B'$.
This problem may arise, for example, when we have a time series that can be thought of as from an inexact tandem repeat of a known specific pattern, and want to cut it into the inexact occurrences of the pattern.

If the two-dimensional range counting tree $T_{A, B}$ for $\Pi_{A, B}$ is available, then the problem can be solved in $O(\max(|A| \log |B|, |B|)|B| \log |B|)$ time as follows.
Let $G$ be the directed acyclic graph consisting of source vertices $u_i$ and sink vertices $v_i$ both with $1 \leq i \leq |A|$ and internal vertices $w_j$, each also denoted by $x_{j + 1}$, with $1 \leq j \leq |B| - 1$,  
\begin{itemize}
\item edges each from a source vertex $u_{i_\vdash}$ to a sink vertex $v_{i_\dashv}$ with $i_\vdash \leq i_\dashv$, the weight of which is set to the DTW distance between $A[i_\vdash : i_\dashv]$ and $B$, 
\item edges each from a source vertex $u_i$ to an internal vertex $w_j$, the weight of which is set to the DTW distance between $A[i : |A|]$ and $B[1 : j]$, 
\item edges from an internal vertex $x_{j_\vdash}$ to another internal vertex $w_{j_\dashv}$ with $j_\vdash \leq j_\dashv$, the weight of which is set to the DTW distance between $A$ and $B[j_\vdash : j_\dashv]$, and 
\item edges each from an internal vertex $x_j$ to a sink vertex $v_i$, the weight of which is set to the DTW distance between $A[1 : i]$ and $B[j : |B|]$.
\end{itemize}
Hence, each path on $G$ from a source vertex to a sink vertex corresponds to a distinct quadruple of $\ell$, $B_0 \circ B_1 \circ \cdots \circ B_\ell$, $i_\vdash$, and $i_\dashv$, and vice versa.
Furthermore, the weight of the path is equal to $\mathrm{DTW}(A[i_\vdash : i_\dashv], B)$, if $\ell = 0$, or $\mathrm{DTW}(A[i_\vdash : |A|], B_0) + \sum_{k = 1}^{\ell - 1} \mathrm{DTW}(A, B_k) + \mathrm{DTW}(A[1 : i_\dashv], B_\ell)$, otherwise, where the weight of a path on $G$ is the sum of the weights of all edges in the path.
This implies that the periodic DTW distance problem can be solved by finding an arbitrary path on $G$ from a source vertex to a sink vertex that has minimum weight.
If $G$ is available, then such a path can be found in time linear in the number of edges in $G$, which is $O(|B|^2)$, and in space linear in the number of vertices in $G$, which is $O(|B|)$, by determining the midpoint of the path recursively in a straightforward way.
Instead of constructing $G$ explicitly, we can use the two-dimensional range counting tree $T_{A, B}$ for $S_{A, B}$ as a data structure that supports $O(\log |B|)$-time queries of the weight of any edge in $G$, which allows us to obtain the path in $O(|B|^2 \log |B|)$ time and $O(|B|)$ space, excluding space for storing $T_{A, B}$.
Furthermore, $T_{A, B}$ can be constructed from scratch in $O(|A| |B| \log^2 |B|)$ time and $O(|A| |B|)$ space.
(Adopting the same strategy, we can design an $O(|A| |B|^2)$-time, $O(|B|)$-space algorithm based on DP.)

\begin{theorem}\label{theo pDTWp}
Given a pair of time series $A$ and $B$ with $|A| \leq |B|$, the periodic DTW distance problem can be solved in $O(\max(|A| \log |B|, |B|)|B| \log |B|)$ time and $O(|A| |B|)$ space.
\end{theorem}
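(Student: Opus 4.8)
The plan is to establish the theorem in exactly the two stages suggested by the construction preceding it: first argue that the periodic DTW distance problem is equivalent to a minimum-weight source-to-sink path problem on the DAG $G$, and then bound the cost of solving that path problem through the range counting tree $T_{A,B}$. For the equivalence, I would verify the claimed bijection between source-to-sink paths of $G$ and quadruples $(\ell, B_0 \circ \cdots \circ B_\ell, i_\vdash, i_\dashv)$: a direct $u_{i_\vdash} \to v_{i_\dashv}$ edge realizes the case $\ell = 0$, while a path that leaves a source $u_{i_\vdash}$, traverses a chain of internal vertices via the identification $w_j = x_{j+1}$, and enters a sink $v_{i_\dashv}$ encodes a decomposition of $B$ at the cut points carried by the visited internal vertices. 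Checking that the total edge weight along such a path equals $\mathrm{DTW}(A[i_\vdash:i_\dashv],B)$ when $\ell=0$ and $\mathrm{DTW}(A[i_\vdash:|A|],B_0)+\sum_{k=1}^{\ell-1}\mathrm{DTW}(A,B_k)+\mathrm{DTW}(A[1:i_\dashv],B_\ell)$ otherwise is a telescoping bookkeeping step that falls out of the edge-weight definitions.

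The crucial observation is that each of the four edge families is a DTW distance of precisely the \emph{semi-local} form that Lemma~\ref{lem Tis} handles. Concretely, a $u \to v$ edge compares a contiguous subsequence of $A$ with the whole of $B$; an $x \to w$ edge compares the whole of $A$ with a contiguous subsequence of $B$; a $u \to w$ edge compares a suffix of $A$ with a prefix of $B$; and an $x \to v$ edge compares a prefix of $A$ with a suffix of $B$. These match the four cases of Lemma~\ref{lem Tis} one to one. Hence, using Theorem~\ref{theo S} to rewrite each distance as $c(\cdots)$ minus a banded LIS length and then reading that banded LIS length off $\Pi_{A,B}$ as a two-dimensional range count, I would compute any single edge weight in $O(\log|B|)$ time on $T_{A,B}$, without ever materializing $G$. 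A minimum-weight source-to-sink path can then be recovered in $O(|B|)$ working space by the usual Hirschberg-style midpoint recursion, issuing one weight query per edge relaxation; since $|A| \le |B|$, the number of edges is $O(|B|^2)$, so this phase runs in $O(|B|^2 \log|B|)$ time.

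For the resource totals I would then combine the two phases. Building $T_{A,B}$ costs $O(|A||B|\log^2|B|)$ time and $O(|A||B|)$ space by Lemma~\ref{lem Tis const} together with the $O(|S_{A,B}|\log|S_{A,B}|)$ construction of the range counting tree, recalling that $|S_{A,B}| = O(|A||B|)$ for constant $c$ and hence $\log|S_{A,B}| = O(\log|B|)$. Combining with the $O(|B|^2\log|B|)$ path phase, the running time is $O\!\left(\max(|A||B|\log^2|B|,\,|B|^2\log|B|)\right)$, and a one-line case split on whether $|A|\log|B|\ge|B|$ shows this equals $O(\max(|A|\log|B|,|B|)\,|B|\log|B|)$, matching the statement; the space is dominated by $T_{A,B}$, giving $O(|A||B|)$.

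The main obstacle I anticipate is not any single inequality but the careful index translation: each of the four edge families must be matched to the correct case of Lemma~\ref{lem Tis}, with its endpoints mapped through the auxiliary arrays $G^\vdash_{A,B}$, $G^\dashv_{A,B}$, $H^\vdash_{A,B}$, $H^\dashv_{A,B}$ and the additive offset $c(|A[i_\vdash:i_\dashv]|+|B[j_\vdash:j_\dashv]|+1)$ of Theorem~\ref{theo S}, so that every query returns the intended DTW distance. A secondary point to nail down is that the midpoint recursion stays within $O(|B|)$ space and multiplies the per-edge query cost by only a constant factor, so the path phase does not exceed $O(|B|^2\log|B|)$.
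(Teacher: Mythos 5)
Your proposal is correct and follows essentially the same route as the paper: the same DAG $G$ with the path--quadruple bijection, edge weights answered as semi-local DTW queries on the range counting tree $T_{A,B}$ in $O(\log|B|)$ time each, a Hirschberg-style midpoint recursion over the $O(|B|^2)$ edges in $O(|B|)$ working space, and the final bound obtained by combining the $O(|A||B|\log^2|B|)$ construction cost with the $O(|B|^2\log|B|)$ path phase. Your explicit matching of the four edge families to the four cases of Lemma~\ref{lem Tis} is a correct elaboration of what the paper leaves implicit.
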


\section{Concluding remarks}\label{sec conc}

This article showed that for any pair of time series $A$ and $B$ and any dissimilar function mapping any pair of elements of $A$ and $B$ to an integer in $\{ 0,1,\dots,c \}$, there exists a sequence $S$ of $O(c|A||B|)$ integers such that the DTW distance between any contiguous subsequence of $A$ and any contiguous subsequence of $B$ can be represented by the banded LIS length of a contiguous subsequence of $S$.
As applications of this reduction of DTW to LIS, novel algorithms for three DTW-related problems, the circular, square root, and periodic DTW distance problems, were presented utilizing the semi-local sequence comparison technique of Tiskin~\cite{Tis} originally developed for LCS-related problems.

Compared with the naive DP-based algorithms for the DTW-related problems, the proposed algorithms run asymptotically faster but consume more space.
An immediate question from this time-space trade-off is whether space-inefficiency of our algorithms can be removed by reducing the required space from quadratic to linear.
Another question also comes from the quadratic length of the integer sequence representing the DTW distance by its LIS length.
Due to this length, there is a gap between the size of the permutation sequence used to solve the semi-local LCS and DTW problems: linear for LCS and quadratic for DTW.
The fully-local LCS problem, answering queries of an LCS between any given pair of contiguous subsequences, has an interesting trade-off between space consumption and query times.
That is, the DP algorithm finds an LCS from scratch in quadratic time using linear space, while a quadratic-time constructible data structure can support linear-time queries of an LCS~\cite{Sak2}.
Can we have the same trade-off also on the fully-local DTW problem?
In other words, are there any quadratic-space (or even quadratic-time constructible) data structures supporting linear-time queries of a DTW alignment between any pair of contiguous subsequences?
All the aforementioned questions could be resolved if one can find a way to apply the semi-local LCS comparisons of Tiskin~\cite{Tis} more directly to the case of DTW, without using a reduction to LIS. So far we have not been able to find such a method.

\section*{Acknowledgements}

\noindent The work of Shunsuke Inenaga was supported by JST PRESTO Grant Number JPMJPR1922.

\bibliographystyle{spmpsci}      

\end{document}